\documentclass[reprint,
					aps,
					pra,
					english,
					10pt
]{revtex4-1}

	\usepackage[utf8]{inputenc}
	\usepackage[T1]{fontenc}

	\usepackage[intlimits]{amsmath}

	\usepackage{fixmath}

	\usepackage{amssymb}

	\usepackage{mathtools}
	\mathtoolsset{showonlyrefs}

	\usepackage[caption=false]{subfig}

	\usepackage{dsfont}

	\usepackage{textcomp}

	\usepackage{graphicx}
	
	\usepackage[final,colorlinks=true,linkcolor=blue,citecolor=blue]{hyperref}

	\usepackage{amsthm}

	\theoremstyle{plain}
	\newtheorem{theorem}{Theorem}

	\theoremstyle{definition}
	\newtheorem{definition}{Definition}


  \renewcommand*\d[1]{d #1\,}
  \newcommand{\ii}{i}
  \newcommand*{\per}{\operatorname{perm}}
  \newcommand{\abs}[1]{\left\lvert{#1}\right\rvert}
  \newcommand{\ee}[1]{\operatorname{e}^{#1}}
  \newcommand*{\defeq}{\mathrel{\vcenter{\baselineskip0.5ex \lineskiplimit0pt
                     \hbox{\scriptsize.}\hbox{\scriptsize.}}}%
                     =}
  \renewcommand{\Omega}{\varOmega}
  \renewcommand*\vec[1]{\mathbold{#1}}
  
  \newcommand{\ket}[1]{\vert #1 \rangle}

  \newcommand*{\EV}{\operatorname{E}\displaylimits}

  \makeatletter
  \def\citecheck{%
	  \expandafter\ifx\csname bla\endcsname\relax%
		  \protected@write\@mainaux{}%
		  {\string\checkdef{1}}%
		  \typeout{LaTeX Warning: Label(s) may have changed. Rerun to get cross-references right .}%
	  \else
	  \ifnum\bla=1
		  \protected@write\@mainaux{}%
		  {\string\checkdef{2}}%
		  \typeout{LaTeX Warning: Label(s) may have changed. Rerun to get cross-references right .}%
	  \else
	  \ifnum\bla=2
	  \fi
	  \fi
	  \fi
  }

	\def\checkdef#1{\expandafter\xdef\csname bla\endcsname{#1}}
	\makeatother

  \begin{document}
  \title{From the Physics to the Computational Complexity of Multiboson Correlation Interference}
  \author{Simon Laibacher}
  \author{Vincenzo Tamma}
  \email{vincenzo.tamma@uni-ulm.de}
  \affiliation{Institut f\"{u}r Quantenphysik and Center for Integrated Quantum
  Science and Technology (IQ\textsuperscript{ST}), Universit\"at Ulm, D-89069 Ulm, Germany}

  \begin{abstract}  
	  We demonstrate how the physics of \emph{multiboson correlation interference}
	  leads to the computational complexity of linear optical interferometers
	  based on correlation measurements in the degrees of freedom of the input bosons.
	  In particular, we address the task of \emph{MultiBoson Correlation Sampling}
	  (MBCS) from the
	  probability distribution associated with polarization- and time-resolved
	  detections at the output of random linear optical networks. We show that the MBCS problem is fundamentally hard to solve classically
	  even for nonidentical input photons, regardless of the color of the photons, making it also very appealing from an experimental point of view. These results fully manifest 
	  the quantum computational supremacy inherent to the fundamental nature
	  of quantum interference.
  \end{abstract}
  \maketitle

  \citecheck 
\textbf{\textit{Motivation.}}
The interference of multiple bosons based on high-order correlation measurements \cite{TammaLaibacherPRL,TammaLaibacher2,Tamma2015a} in a linear network is a phenomenon that is fundamental in atomic, molecular, and optical physics. The richness of its features 
gives rise to a wide variety of applications in quantum information processing \cite{TammaLaibacherPRL,Pan2012,Knill2001}, quantum metrology \cite{HanburyBrown1956,Motes2015,DAngelo2008}, and imaging \cite{Pittman1995}. Already correlated detections of two bosons after the interaction with a balanced beam splitter reveal an interference effect of truly quantum mechanical origin \cite{Hong1987,Alley1986,*Shih1988,Kaufman2014,Lopes2015}: both particles always end up in the same output port due to the destructive interference of the two-boson quantum paths in which the bosons are either both reflected or both transmitted.

Going to higher-order correlation measurements in optical networks of large dimensions, multiboson interference becomes increasingly complex, promising a computational power that is not achievable classically \cite{aaronson2011computational,Tamma2015}.
Multiphoton correlation experiments with more than two photons have already been performed \cite{Yao2012,Ra2013,Metcalf2013,Broome2013,Crespi2013,tillmann2013experimental,Tillmann2014,Spring2013,Spagnolo2014,Carolan2013,Bentivegna2015},
providing an important milestone towards experiments of higher orders \cite{Franson2013,Ralph2013}.

These experiments are usually based on joint measurements at the interferometer output ports
``classically'' averaging over the photons' degrees of freedom (e.g. time, polarization).
In this context, Aaronson and Arkhipov argued the computational hardness of multiboson interference in linear optics for identical bosons by introducing the well-known boson sampling problem \cite{aaronson2011computational}. Does this computational hardness also occur for nonidentical photons? 
While the computational complexity for partially distinguishable photons is still not known \cite{Tamma2015}, it is clear that boson sampling becomes computationally trivial for fully distinguishable photons when the information about the detection times and polarizations is completely ignored.

However, recent technological advances have enabled experimentalists to produce arbitrarily polarized single photons with near arbitrary spectral and temporal properties \cite{Keller2004,Kolchin2008,Polycarpou2012} which can be ``read out'' by time- and polarization-resolving measurements \cite{TammaLaibacherPRL,Legero2004,Zhao2014,PanYuan2007,PanBao2012} with extremely fast detectors \cite{Eisaman2011}.
This makes it possible to encode entire ``quantum alphabets'' in the degrees of freedom of multiple photons \cite{Nisbet-Jones2013,Monroe2012} and to retrieve the encoded information by correlation measurements in those degrees of freedom, representing a valuable tool in quantum information processing \cite{TammaLaibacherPRL,Tamma2014,Tamm2015,Tamm2015a,tamma2011factoring,Tamma2009,Tamma2012,Bennett1992,Moehring2007,Yuan2008,Greenberger1989,Bouwmeester1999,Bennett1993,Zukowski1993,Holleczek2015}.

All these remarkable technological achievements now allow experimentalists to fully address the following fundamental questions about the interplay between the physics and the complexity of multiboson interference: 
How do the spectral distributions of $N$ nonidentical photons determine the occurrence of $N$-photon interference events in time- and polarization-resolving correlated measurements? How and to what degree is this occurrence connected with computational complexity?
Does computational hardness really disappear for input bosons that are completely distinguishable in their spectra?
This letter aims to answer all these important questions, from both a fundamental and an experimental point of view, demonstrating the inherent computational complexity of the physics of multiboson correlation interference even for nonidentical photons.

\textbf{\textit{MultiBoson Correlation Sampling (MBCS)}}.

We consider $N$ single photons prepared at the $N$ input ports of a linear interferometer (see Fig.~\ref{fig:InterferometerSetup}) with $2M\gg 2N$ ports. The interferometer unitary transformation $\mathcal{U}$ is chosen randomly according to the Haar measure and is implemented by using a polynomial number (in $M$) of passive linear optical elements~\cite{Reck1994}. The state of $N$ single photons injected in a set  $\mathcal{S}$ of $N$ input ports $s \in \mathcal{S}$ is given by
\begin{align}
	\ket{\mathcal{S}} \defeq \bigotimes_{s\in \mathcal{S}}
	\ket{1[\vec{\xi}_{s}]}_{s}
	\bigotimes_{s \notin \mathcal{S}}
	\ket{0}_{s},
	\label{eqn:StateDefinition}
\end{align}
with the single photon states
\begin{align}
	\ket{1[\vec{\xi}_{s}]}_{s} \defeq \sum_{\lambda=1,2} \int_{0}^{\infty} \d{\omega} \left( \vec{e}_{\lambda}\cdot \vec{\xi}_{s}(\omega) \right) \hat{a}^{\dagger}_{s,\lambda} (\omega) \ket{0}_{s},
	\label{eqn:SinglePhotonState}
	\noeqref{eqn:SinglePhotonState}
\end{align}
where $\{\vec{e}_1,\vec{e}_2\}$ is an arbitrary polarization basis and $\hat{a}^{\dagger}_{s,\lambda}(\omega)$ is the creation operator for the frequency mode $\omega$ and the polarization $\lambda$ \cite{Loudon2000}. The complex spectral amplitude 
\begin{align}
	\vec{\xi}_s (\omega) \defeq \vec{v}_s\; \xi_s(\omega-\omega_s) \ee{\ii \omega t_{0s}}
	\label{eqn:ComplexSpectrum}
\end{align}
is defined by the spectral shape $\xi_s(\omega-\omega_s) \in \mathds{R}$ (centered around the central frequency (photon color) $\omega_s$ and with normalization $\int \d{\omega} \abs{\xi_s(\omega)}^2 = 1$), the polarization $\vec{v}_s$, and the time $t_{0s}$ of emission of the photon injected in the port $s\in \mathcal{S}$.
For simplicity, we consider input-photon spectra satisfying the narrow bandwidth approximation and a polarization-independent interferometric evolution with equal propagation time $\Delta t$ for each possible path from an input source to a detector at the interferometer output.

Given such a multiboson interferometer and assuming identical photons, $\vec{\xi}_s = \vec{\xi} \ \forall s\in\mathcal{S}$, the boson sampling problem \cite{aaronson2011computational} was defined by Aaronson and Arkhipov as the task of sampling from the probability distribution over the output port samples $\mathcal{D}$, regardless of detection times and polarizations.
We address here an interesting generalization of this famous problem by introducing the problem of MultiBoson Correlation Sampling (MBCS) \cite{TammaLaibacherPRL,Tamma2015b}.
The MBCS problem is defined as the task of sampling
at the interferometer output from the probability distribution associated with time- and polarization-resolving correlation measurements.
Each possible sample corresponds to an $N$-photon detection event at an $N$-port subset $\mathcal{D}$ of the $M$ output ports at given times and polarizations $\{t_d, \vec{p_d}\}_{d \in \mathcal{D}}$, with $\vec{p}_d \in \{ \vec{e}_1, \vec{e}_2\}$ \footnote{The case of boson bunching at the detectors can be neglected for $M\gg N$ \cite{aaronson2011computational}.}.
\begin{figure} 
  \begin{center}
	  \includegraphics[scale=0.95]{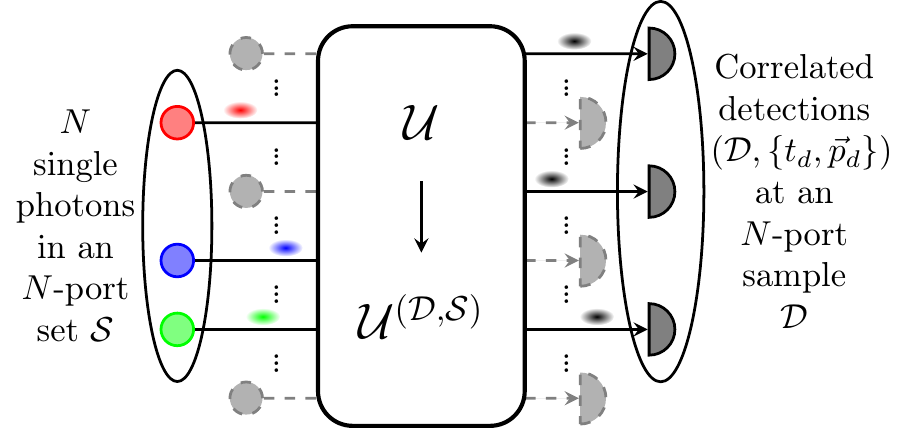}
	  \caption{General setup for multiboson correlation sampling. $N$ single photons are injected into
		  an $N$-port subset $\mathcal{S}$ of the $M\gg N$ input ports of a
		  random linear interferometer. At the output of the interferometer, they are detected in one of the possible port samples $\mathcal{D}$ containing $N$ of the $M$ output ports at corresponding detection times and polarizations
		  $\{t_d,\vec{p}_d\}_{d\in \mathcal{D}}$. For each output port sample $\mathcal{D}$ and given input configuration
		  $\mathcal{S}$, the evolution through the interferometer
		  is fully described by a $N\times N$ submatrix $\mathcal{U}^{(\mathcal{D},\mathcal{S})}$ of the
		  $M \times
		  M$ interferometer matrix $\mathcal{U}$. 
	  }
	  \label{fig:InterferometerSetup}
  \end{center}
\end{figure}
The $N$-photon detection probability rate corresponding to a sample  
$(\mathcal{D},\{t_d, \vec{p_d}\}_{d \in \mathcal{D}})$ depends \cite{TammaLaibacherPRL} on both the
$N \times N$ submatrix
\begin{align}
	\mathcal{U}^{(\mathcal{D},\mathcal{S})} \defeq [ \mathcal{U}_{d,s} ]_{\substack{d\in \mathcal{D} \\ s\in \mathcal{S}}}
\end{align}
of the $M\times M$ unitary matrix $\mathcal{U}$ describing the interferometer, and the Fourier transforms 
\begin{align}
	\vec{\chi}_s(t) &\defeq \mathcal{F}[\vec{\xi}_s](t-\Delta t) \\ 
	&= \vec{v}_s \; \chi_s(t-t_{0s}-\Delta t) \ee{\ii \omega_{s} (t-t_{0s} -\Delta t)} 
	\label{eqn:Fourier}
\end{align}
of the single-photon spectra $\vec{\xi}_s(\omega)$ in Eq.~\eqref{eqn:ComplexSpectrum} (with $\chi_s(t)$ being the Fourier transform of $\xi_s(\omega)$). Defining 
the matrices
\begin{align}
	\mathcal{T}^{(\mathcal{D},\mathcal{S})}_{\{t_d,\vec{p}_d\}} \defeq
	\big[ \mathcal{U}_{d,s} \;\big( \vec{p}_d \cdot \vec{\chi}_s(t_d) \big)
	\big]_{\substack{d\in \mathcal{D} \\ s\in \mathcal{S}}}
		\label{eqn:CorrMatrixDefintion}
\end{align}
and using the definition
\begin{align}
\per \mathcal{M} \defeq \sum_{\sigma \in \Sigma_N} \prod_{i=1}^N \mathcal{M}_{i,\sigma(i)}
\label{eqn:PermDef} 
\end{align}
of the permanent of a matrix $\mathcal{M}$,
where the sum runs over all permutations $\sigma$ in the symmetric group $\Sigma_N$, the probability rate of an $N$-fold detection event $(\mathcal{D},\{t_d, \vec{p_d}\}_{d \in \mathcal{D}})$ is 
\begin{align}
	G^{(\mathcal{D},\mathcal{S})}_{\{t_d, \vec{p}_d\}}
	&= \abs{\per \mathcal{T}^{(\mathcal{D},\mathcal{S})}_{\{t_d,\vec{p}_d\}}}^2 
	,
	\label{eqn:CorrelationFinal}
\end{align}
for ideal photodetectors.

By considering an integration time $T_I$ short enough such that
 \begin{multline}
	 \forall t_d: \chi_s (t- t_{0s} -\Delta t) \chi_{s'} (t - t_{0s'} - \Delta t) \ee{\ii (\omega_{s} - \omega_{s'}) t} \approx \text{const.}  \\
	 \quad \forall t \in [t_d-T_I,t_d+T_I], \forall s,s' \in \mathcal{S},
	 \label{eqn:ConditionShortIntegrationTime}
 \end{multline}
we obtain, for a detection sample $(\mathcal{D},\{t_d, \vec{p}_d\}_{d \in \mathcal{D}})$, the probability 
\begin{align}
	P^{(\mathcal{D},\mathcal{S})}_{\{t_d, \vec{p}_d\}}
	&= (2 \,T_I)^N \big|\!\per \mathcal{T}^{(\mathcal{D},\mathcal{S})}_{\{t_d,\vec{p}_d\}}\big|^2 
	\label{eqn:ProbFinal}
\end{align}
of an $N$-fold detection in the time intervals $\{[t_d-T_I,t_d+T_I]\}_{d \in \mathcal{D}}$, where the detection time axes are discretized with step width $2 T_I$.

We emphasize that, for each possible sample $(\mathcal{D},\{t_d,\vec{p}_d\}_{d\in\mathcal{D}})$, the probability in Eq.~\eqref{eqn:ProbFinal} is at most exponentially small in $N$, as demonstrated in Theorem~1 in the Supplemental Material \footnote{The Supplemental Material can be found at the end of this file and contains the references \cite{Stockmeyer1983,Toda1991,Kuperberg2015}}.

\textbf{\textit{Exact MBCS}}.
Obviously, the complexity of sampling exactly from the probability distribution defined by Eq.~\eqref{eqn:ProbFinal} depends on the $N$-tuples $\{\vec{\xi}_s\}_{s\in\mathcal{S}}$ of single-photon input spectra in Eq.~\eqref{eqn:ComplexSpectrum} \footnote{For a more mathematical definition of the MBCS problem in the exact and the approximate case, we refer to the Supplemental Material \cite{Note2}}.

With that in mind, in order to establish the complexity of exact MBCS,
it is useful to define the $N$-photon interference matrix with elements
\begin{align}
	a(s,s') &\defeq \abs{\vec{v}_s \cdot \vec{v}_{s'}} \int_{-\infty}^{\infty} \d{t} \abs{\chi_s(t-t_{0s})} \abs{\chi_{s'}(t-t_{0s'})} \leq 1,
	\label{eqn:TwoPhotonOverlapModulus}
\end{align}
with $s, s' \in \mathcal{S}$, depending on the pairwise overlaps of the absolute values of the temporal single-photon detection amplitudes \cite{Glauber2006} $\chi_s(t-t_{0s}-\Delta t) \ee{\ii \omega_s (t-t_{0s}-\Delta t)}$ and of the polarizations $\vec{v}_s$ in Eq.~\eqref{eqn:Fourier} .
For non-vanishing elements
\begin{align}
	0 < a(s,s') \leq 1 \quad \forall s,s' \in \mathcal{S},
	\label{eqn:NonvanishingOverlap}
\end{align}
there exists a time interval $T$ and at least a polarization $\vec{e}_{\bar{\lambda}} \in \{ \vec{e}_1,\vec{e}_2 \}$, such that 
$$\vec{e}_{\bar{\lambda}}\cdot \vec{\chi}_s(t_d) \neq 0 \,\,\,\,\,\forall t_d\in T, \forall s \in \mathcal{S}, \forall d \in \mathcal{D}.$$ 

It is then ensured that for each detection sample $(\mathcal{D},\{t_d, \vec{p_d}\}_{d \in \mathcal{D}})$, with $t_d \in T, \vec{p}_d = \vec{e}_{\bar{\lambda}} \ \forall d\in \mathcal{D}$, the input photons are indistinguishable at the detectors: 
this leads to the interference of all possible $N!$ $N$-photon quantum paths manifested by the coherent superposition of all corresponding, non-vanishing $N!$ $N$-photon detection amplitudes in Eq.~\eqref{eqn:CorrelationFinal}.
Therefore, only the conditions~\eqref{eqn:ConditionShortIntegrationTime} and \eqref{eqn:NonvanishingOverlap} for the nonidentical input spectra $\{\vec{\xi}_s\}_{s\in\mathcal{S}}$ in Eq.~\eqref{eqn:ComplexSpectrum} are enough to ensure the occurrence of $N$-photon correlation interference events.

Even more interestingly, the same simple conditions lead to the computational hardness of the exact MBCS problem, establishing a connection between the occurrence of multiphoton correlation interference and complexity. Indeed, for approximately equal detection times $t_d\approx t \in T$ and equal polarizations $\vec{p}_d=\vec{e}_{\bar{\lambda}},\, \forall d\in \mathcal{D}$,
the multiphoton detection probabilities in Eq.~\eqref{eqn:ProbFinal} become
\begin{align}
	P^{(\mathcal{D},\mathcal{S})}_{\{t_d, \vec{p}_d\}}
	= \abs{\per \mathcal{U}^{(\mathcal{D},\mathcal{S})}}^2 (2 T_I)^N \prod_{s\in \mathcal{S}}  \abs{ \vec{e}_{\bar{\lambda}}\cdot \vec{\chi}_s(t)}^2.
	\label{eqn:RateIdenticalDets}
\end{align}
The interference of all $N$-photon quantum paths in Eq.~\eqref{eqn:RateIdenticalDets} depends, apart from an overall factor, only on the permanent of a submatrix $\mathcal{U}^{(\mathcal{D},\mathcal{S})}$ of the interferometer random unitary matrix $\mathcal{U}$. 
For $N\ll M$, these matrices have elements given by approximately independent and identically distributed (i.i.d.) Gaussian random variables and the approximation of their respective permanents is a \#P-hard task \cite{aaronson2011computational}. 
We emphasize that the presence of only an arbitrarily small fraction of samples with probabilities as in Eq.~\eqref{eqn:RateIdenticalDets} would be enough to ensure the hardness of the exact MBCS. This can be shown analogously to the hardness proof of the original problem of exact boson sampling in \cite{aaronson2011computational}.
Indeed, the ability to perform exact MBCS with a polynomial number of resources would imply that the task of approximating any given, fixed permanent associated with the probability distribution~\eqref{eqn:ProbFinal} is in the complexity class $\mathrm{BPP^{NP}}$. Since this would also include the task of approximating the \#P-hard permanents emerging in Eq.~\eqref{eqn:RateIdenticalDets}, the polynomial hierarchy would collapse to the third level, which is strongly believed to be highly unlikely.
We refer to section II of the Supplemental Material for more details \cite{Note2}.

Interestingly, differently from the original boson sampling problem \cite{aaronson2011computational}, the classical intractability of exact MBCS is not conditioned on input photons with approximately identical spectra $\vec{\xi}_s$ in Eq.~\eqref{eqn:ComplexSpectrum}. Only the simple conditions~\eqref{eqn:ConditionShortIntegrationTime} and \eqref{eqn:NonvanishingOverlap} on the spectra are enough to guarantee its computational hardness. 

 \textbf{\textit{Approximate MBCS.}}
 Is approximate MBCS also not tractable with a classical computer? Such a question is obviously of fundamental importance from an experimental point of view, since it takes into account the inevitable experimental errors in an MBCS quantum interferometer which make only approximate sampling possible \cite{Note3}.
 We consider, for simplicity, the case of an $N$-photon interference matrix in Eq.~\eqref{eqn:TwoPhotonOverlapModulus} with unit elements
\begin{align}
	 a(s,s') \cong 1 \quad \forall s,s' \in \mathcal{S}.
	 \label{eqn:FullTemporalOverlap}
\end{align}
This corresponds to two possible scenarios. Either all the input photons are completely identical or they differ only by their color, i.e. central frequency. In these cases the input photons have equal polarizations and are always indistinguishable at the detectors independently of the detection times and polarizations.

To simplify the expressions, we consider here polarization-insensitive-detectors.

\paragraph{Identical input photons.}
For approximately identical frequency spectra
\begin{align}
	\vec{\xi}_s(\omega) &\cong \vec{\xi}(\omega) \quad \forall s \in \mathcal{S},
\end{align}
by using Eq.~\eqref{eqn:ProbFinal}, the polarization-insensitive detection probability reads
\begin{align}
	P^{(\mathcal{D},\mathcal{S})}_{\{t_d\}} &\defeq \sum_{\{\vec{p}_d\}\in \{\vec{e}_1,\vec{e}_2\}^N} P^{(\mathcal{D},\mathcal{S})}_{\{t_d,\vec{p}_d\}} \\
	&= \abs{\per \mathcal{U}^{(\mathcal{D},\mathcal{S})}}^2 (2T_I)^{N} \prod_{d\in \mathcal{D}} \abs{\vec{\chi}(t_d)}^2,
\end{align}
where we used the property ${\sum_{\vec{p}_d=\vec{e}_1,\vec{e}_2}\abs{\vec{p}_d\cdot \vec{v}} = \abs{\vec{v}}^2 = 1}$.
Of course the only possible events occur within a detection-time interval where the function 
$\abs{\vec{\chi}(t_d)}= \abs{\mathcal{F}[\vec{\xi}](t_d-\Delta t)}$ is not negligible. Here, independently of the detection times $\{t_d\}_{d\in \mathcal{D}}$, all the probability rates associated with each possible sample $(\mathcal{D},\{t_d\}_{d\in \mathcal{D}})$ are given, apart from a prefactor, by the permanents of $N \times N$ submatrices $\mathcal{U}^{(\mathcal{D},\mathcal{S})}$ of the interferometer transformation $\mathcal{U}$. 

When the observer ignores the information about the detection times the approximate MBCS problem reduces to the well known standard formulation of the approximate boson sampling problem, which Aaronson and Arkhipov argued to be intractable with a classical computer \cite{aaronson2011computational}.  
Therefore, the approximate MBCS problem  is at least as complex as the original approximate boson sampling problem.

\paragraph{Photons of different colors.}

We now address the case of input photons in Eq.~\eqref{eqn:SinglePhotonState} with spectral distributions
\begin{align}
	\vec{\xi}_s(\omega) = \vec{v} \; \xi(\omega-\omega_s) \ee{\ii \omega t_0},
\end{align}
with equal emission times $t_{0s} = t_0$ and equal polarizations $\vec{v}_s = \vec{v}$ but different colors $\omega_s$. For simplicity, we consider spectral shapes
\begin{align}
	\xi(\omega) = \frac{1}{\sqrt{\pi \Delta\omega}} \operatorname{sinc} \left( \frac{\omega}{\Delta\omega} \right)
	\label{}
\end{align}
with equal bandwidths $\Delta\omega_{s} = \Delta\omega \lesssim \abs{\omega_s - \omega_{s'}} \ \forall s,s'$, where $\operatorname{sinc} x \defeq \sin x / x$. The $N$-photon interference at the detectors is therefore characterized by the Fourier transforms
\begin{align}
	\vec{\chi}_s(t) = 
	\vec{v} \sqrt{\frac{\Delta\omega}{2}} \operatorname{rect} \left( \frac{\Delta\omega(t-t_{0}-\Delta t)}{2} \right) \ee{\ii \omega_{s} (t-t_{0}-\Delta t)},
	\label{}
\end{align}
with the rectangular function
\begin{align}
	\operatorname{rect} x \defeq
	\begin{dcases}
		1 & \abs{x} < \frac{1}{2} \\
		\frac{1}{2} & \abs{x}  = \frac{1}{2} \\
		0 & \text{else}
	\end{dcases} \quad .
\end{align}
Therefore, the condition~\eqref{eqn:FullTemporalOverlap} is satisfied, and the probability rates in Eq.~\eqref{eqn:CorrelationFinal} are non-vanishing only for detection times
$t_d \in T \defeq [t_0 + \Delta t - 1/\Delta\omega, t_0 + \Delta t + 1/\Delta\omega]\ \forall d\in \mathcal{D}$.
Moreover, Eq.~\eqref{eqn:ConditionShortIntegrationTime} is fulfilled for integration times 
\begin{align}
T_I \ll \abs{\omega_s - \omega_{s'}}^{-1} \ \forall s,s' \in \mathcal{S},
\label{eqn:ShortIntegrationTime}
\end{align}
where $(T_I \Delta\omega)^{-1}$ defines the number of discrete steps of length $2T_I$ along the time interval $T$.
As is known, detectors with such high time-resolution cannot distinguish photons of different colors $\omega_s$ and multiphoton interference can be observed.
Indeed, from Eq.~\eqref{eqn:ProbFinal}, the polarization-independent detection probabilities are
\begin{align}
	P^{(\mathcal{D},\mathcal{S})}_{\{t_d\}} = 
	\left( \Delta\omega\,T_I \right)^{N} \abs{\per \Big( \big[ \mathcal{U}^{(\mathcal{D},\mathcal{S})}_{d,s} \ee{\ii \omega_s t_d} \big]_{\substack{d\in \mathcal{D}\\ s\in \mathcal{S}}} \Big) }^2
	\label{eqn:ProbDiffCol}
\end{align}
for all possible detection time intervals $[t_d - T_I,t_d + T_I]\subset T$. Such probabilities are proportional to permanents of matrices whose elements are the elements of $\mathcal{U}^{(\mathcal{D},\mathcal{S})}$ multiplied by the complex phases $\exp(\ii \omega_s t_d)$.

Since the elements $\mathcal{U}^{(\mathcal{D},\mathcal{S})}_{d,s}$ of the submatrices $\mathcal{U}^{(\mathcal{D},\mathcal{S})}$ are i.i.d. Gaussian random variables and the phase factors $\ee{\ii \omega_s t_d}$ only rotate such elements in the complex plane, the entries of the matrices $[ \mathcal{U}^{(\mathcal{D},\mathcal{S})}_{d,s} \ee{\ii \omega_s t_d} ]_{\substack{d\in \mathcal{D} \\ s\in \mathcal{S}}}$ are also i.i.d. Gaussian random variables as shown in App.~C of the Supplemental Material \cite{Note2}.
Therefore, the probability distribution of the interferometer output interestingly depends, for all possible samples, on permanents whose approximation to within a multiplicative factor is a \#P-hard problem \cite{aaronson2011computational}.
Consequently, even for input photons of different colors, it is possible to show in analogy with Ref.~\cite{aaronson2011computational} that approximate MBCS is of at least the same complexity as the standard boson sampling with identical photons
\footnote{
We refer to section III of the Supplemental Material \cite{Note2} for a more detailed formal proof.
}.
As a ``bonus'', the number of possible samples $(\mathcal{D},\{t_d\}_{d\in \mathcal{D}})$ is exponentially larger (by a factor $(T_I \Delta\omega)^{-N}$ with $T_I \Delta\omega \ll 1$ according to Eq.~\eqref{eqn:ShortIntegrationTime}) with respect to the standard boson sampling problem.

Does approximate MBCS retain its complexity even for photons which are completely pairwise distinguishable in their colors $\omega_s$ (i.e. $\omega_s - \omega_{s'} \gg \Delta\omega \ \forall s \neq s'$)? We first emphasize that, since these photons are characterized by a pairwise overlap
\begin{align}
	\int_{0}^{\infty} \d{\omega} \vec{\xi}_s(\omega) \cdot \vec{\xi}_{s'}(\omega) \cong 0 \,\,\, \ \forall s \neq s',
	\label{eqn:OverlapComplex}
\end{align}
the approximate boson sampling problem is trivial \cite{aaronson2011computational}. Indeed, in this case, by averaging the rate in Eq.~\eqref{eqn:CorrelationFinal} over all possible detection times and polarizations, one finds that
the boson sampling probability \cite{TammaLaibacherPRL}
\begin{align}
	P^{(\mathcal{D},\mathcal{S})} 
	&= \per \big[ \big| \mathcal{U}^{(\mathcal{D},\mathcal{S})}_{d,s}\big|^2 \big]_{\substack{d\in \mathcal{D} \\ s \in \mathcal{S}}},
	\label{}
\end{align}
for an output port sample $\mathcal{D}$, is given by
the permanent of a non-negative matrix that can be approximated with a polynomial number of resources \cite{Jerrum2004}.
Consequently, one might guess that also the approximate MBCS is computationally trivial.
Nonetheless, the complexity emerging from the result in Eq. \eqref{eqn:ProbDiffCol} is independent of the colors $\omega_s$ of the input photons, demonstrating that also in this case approximate MBCS is classically intractable.

Two essential physical aspects are behind the demonstrated complexity of approximate MBCS:
\emph{all} possible detection-time events can be an outcome of the sampling experiment (none of the events is disregarded) and \emph{all} these time samples arise from the interference of $N!$ multiphoton quantum paths. In conclusion, the physics of sampling among all possible $N$-photon interference events behind our proposal is at the heart of the complexity of approximate MBCS.

\textbf{\textit{Discussion}}. 
In this letter, we demonstrated how and to what degree the occurrence of multiphoton interference in time- and polarization-resolving correlation measurements leads to computational hardness in linear optical interferometers.

The definition of an $N$-photon interference matrix $a(s,s')$ in Eq.~\eqref{eqn:TwoPhotonOverlapModulus} allowed us to formulate the simple sufficient condition~\eqref{eqn:NonvanishingOverlap} on the spectra of the input photons for the occurrence of $N$-photon interference, provided sufficiently small integration times (see Eq.~\eqref{eqn:ConditionShortIntegrationTime}).

Remarkably, these two simple conditions are also sufficient to guarantee the complexity of exact MBCS. In contrast, the complexity of the original exact boson sampling problem has only been proven for identical input photons.

For approximate MBCS on the other hand, not only the existence of samples exhibiting full $N$-photon interference (guaranteed by~\eqref{eqn:NonvanishingOverlap}) is important but also their fraction with respect to the total number of samples.
Interestingly, this is encoded in the magnitude of the entries $a(s,s')$ of the $N$-photon interference matrix in Eq.~\eqref{eqn:TwoPhotonOverlapModulus}.

It was thus natural to consider the simple case of full overlap of the modulus of the single-photon detection amplitudes ($a(s,s') = 1$) where all possible detection events correspond to $N$-photon interference samples
\footnote{The case $0 < a(s,s') < 1$, where the number of $N$-photon interference events is only a finite fraction of the total number of possible events, is beyond the scope of this letter and will be addressed in a future publication.}.
In this case, corresponding to identical input photons or photons with arbitrary colors, approximate MBCS is at least of the same complexity as boson sampling with identical photons.

This is particularly interesting if the differences in the central frequencies are much larger than the width of the single photons' spectral shapes, corresponding to fully distinguishable photons in the sense of Eq.~\eqref{eqn:OverlapComplex}. While approximate boson sampling becomes trivial in this case \cite{aaronson2011computational}, approximate MBCS is at least as complex as when perfectly identical input photons are used.

Since detectors with high temporal resolution ($<1\,$ns) and single photons with large coherence times ($> 1$\,\textmu s) are readily available today experimentally \cite{Eisaman2011}, the requirement of time-resolved measurements in the implementation of the MBCS problem can be readily fulfilled.
Moreover, an implementation of MBCS has the advantage to ease the difficulties faced in the production of identical photons. 
Indeed, photons of approximately equal colors ($\Delta\omega \gg \abs{\omega_s-\omega_{s'}} \ \forall s\neq s'$) are not needed any more, unlike in the original approximate boson sampling problem. This furthermore paves the way towards the use of photons of arbitrarily small bandwidth $\Delta\omega$, where the indistinguishability in the emission times ($1/\Delta\omega \gg \abs{t_{0s}-t_{0s'}} \ \forall s,s'$) can be easily achieved.

In conclusion, all these results represent an important stepping-stone towards a full fundamental understanding of the complexity of multiphoton interference of photons of arbitrary spectra in linear optical networks, when the information about detection times and polarizations is not ignored. This may lead to ``real world'' applications in quantum information processing \cite{Holleczek2015} and in quantum optics overcoming the experimental challenge in the production of identical bosons.

Finally, our results can be extended to bosonic interferometric networks with atoms
\cite{Kaufman2014,Lopes2015,Shen2014}, plasmons \cite{Varro2013} or mesoscopic many-body systems \cite{Urbina2014} and are also relevant to the study of the complexity of multiboson correlation interference for different input states \cite{TammaLaibacher2,Lund2014,Olson2015} and different correlation measurements \cite{Glauber2010}.

\begin{acknowledgments}
The authors are very grateful to S. Aaronson for useful insights and discussions, as well as to K. Ranade for providing insights on the theory of i.i.d. Gaussian matrices.

V.T. acknowledges the support of the German Space Agency DLR with funds
provided by the Federal Ministry of Economics and Technology (BMWi) under
grant no. DLR 50 WM 1556.

This work was supported by a grant from the Ministry of Science, Research and the Arts of Baden-W\"urttemberg (Az: 33-7533-30-10/19/2).

The authors contributed equally to this letter.
\end{acknowledgments}

%

\clearpage
\widetext
\begin{center}
\textbf{\large From the Physics to the Computational Complexity of Multiboson Correlation
  Interference: Supplemental Material}
\end{center}
\setcounter{equation}{0}
\setcounter{figure}{0}
\setcounter{table}{0}
\setcounter{page}{1}
\makeatletter
\renewcommand{\theequation}{S\arabic{equation}}
\renewcommand{\thefigure}{S\arabic{figure}}
\renewcommand{\bibnumfmt}[1]{[S#1]}
\renewcommand{\citenumfont}[1]{S#1}
\pagenumbering{Roman}

  \section{The MultiBoson Correlation Sampling (MBCS) problem}

  Here, we formally define the MultiBoson Correlation Sampling (MBCS) problem in linear interferometers described by a unitary $M\times M$ matrix chosen randomly according to the Haar measure.

  We first introduce the notion of families of interferometer input states $\{|1[\vec{\xi}_s]\rangle_s\}_{s\in\mathcal{S}}$, at the input ports $s\in \mathcal{S}$, defined by the sets $\{\vec{\xi}_s\}_{s\in \mathcal{S}}$ of $N$ complex spectral distributions 
  \begin{align}
	  \vec{\xi}_s(\omega) \defeq \vec{v}_s \xi_{s}(\omega-\omega_s) \ee{\ii \omega t_{0s}},
	  \label{eqn:Supp:ComplexSpectra}
  \end{align}
  with square integrable spectral shapes $\xi_s(\omega-\omega_s)$ centered around $\omega=\omega_s$, polarizations $\vec{v}_s$, central frequencies $\omega_s$, and emission times $t_{0s}$:

  \begin{definition}[Families of input states]
	  Let $I$ be the set of square normalized, complex spectral distributions which fulfill the narrow-bandwidth approximation. For given sets $F_N \subseteq I^N$, with $N \in \mathds{N}$, of $N$-tuples $\{\vec{\xi}_{s}\}_{s\in\mathcal{S}} \in F_N$ of single-photon spectra, we define the family $F \defeq \{F_N\}_{N\in \mathds{N}}$. Further, we denote as $\mathcal{F}$ the set of all possible families $F$.
	  \label{def:Families}
  \end{definition}

  Secondly, we formally define all the possible samples that can be detected at the interferometer output in the MBCS problem by discretizing the detection-time axis into bins of width $2T_I$ (integration time) much smaller than the temporal widths of the photons and centered at times $t_k\defeq  2 T_I k$. Here, $k=k_{\text{min}},k_{\text{min}}+1,\dots,k_{\text{max}}$, where $k_{\text{min}}, k_{\text{max}} \in \mathds{Z}$ define the temporal interval where detections can occur. From now on, we will refer to the time samples $\{t_d\}_{d\in\mathcal{D}}$ only in terms of the $N$ integers $\{k_d = t_{d}/(2T_I)\}_{d\in \mathcal{D}}$ and define each possible \emph{overall sample} as $\mathfrak{S}\defeq(\mathcal{D};\{k_d\},\{\vec{p}_d\})$, with $\vec{p}_d\in\{\vec{e}_1,\vec{e}_2\}$ for a fixed orthonormal polarization basis $\{\vec{e}_1,\vec{e}_2\}$.

  The probability distribution
  \begin{align}
	  \mathcal{P}_F= \mathcal{P}_F(\{\vec{\xi}_s\}\in F_N,A\in \mathfrak{U}_{M,N})
	  \label{eqn:Supp:Distribution}
  \end{align}
  associated with all the possible samples $\mathfrak{S}$ is defined by a given rectangular $M\times N$ submatrix
  \begin{align}
	  A \defeq [\mathcal{U}_{d,s}]_{\substack{d=1,\dots,M \\ s\in \mathcal{S}}} \in \mathfrak{U}_{M,N}
  \end{align}
  of the interferometer matrix $\mathcal{U}$,
  where $\mathfrak{U}_{M,N}$ is the set of $M\times N$ matrices with orthogonal columns,
  and by a given set of complex spectral distributions $\{\vec{\xi}_s\}\in F_N$ for the $N$ input photons, with $F = \{F_N\}_{N\in \mathds{N}} \in \mathcal{F}$.
  
  As shown in the main letter, defining the matrices
  \begin{align}
	  \mathcal{T}^{(\mathcal{D},\mathcal{S})}_{\{k_d,\vec{p}_d\}} = \mathcal{T}^{(\mathcal{D},\mathcal{S})}_{\{t_{k_d},\vec{p}_d\}} \defeq \Big[ \mathcal{U}_{d,s} \big( \vec{p}_d \cdot \vec{\chi}_s(2T_I k_d)\big) \Big]_{\substack{d\in \mathcal{D}\\ s\in \mathcal{S}}},
	  \label{eqn:Supp:Tmatrix}
  \end{align}
	where
  \begin{align}
	  \vec{\chi}_s(t) = \vec{v}_s \chi_s(t-t_{0s}-\Delta t) \ee{\ii \omega_{s}(t- t_{0s} -\Delta t)}
	  \label{eqn:Supp:TemporalSpectra}
  \end{align}
  are the Fourier transforms of the spectra $\vec{\xi}_s(\omega)$ in Eq.~\eqref{eqn:Supp:ComplexSpectra} ($\Delta t$ is the delay the photons pick up in the interferometer),
  the probability $p_{\mathfrak{S}}$ associated with the sample $\mathfrak{S}$, when sampling from the distribution $\mathcal{P}_F$ in Eq.~\eqref{eqn:Supp:Distribution}, is
  \begin{align}
	  p_{\mathfrak{S}} \defeq \Pr_{\mathcal{P}_F}[\mathfrak{S}]= P^{(\mathcal{D},\mathcal{S})}_{\{t_{k_d},\vec{p}_d\}} 
	  = (2T_I)^N \abs{\per \mathcal{T}^{(\mathcal{D},\mathcal{S})}_{\{k_d,\vec{p}_d\}}}^2. 
	  \label{eqn:Supp:ProbabilitiesExact}
  \end{align}
  For samples $\mathfrak{S}_{\text{id}}$ with identical detection times $k_d = k \ \forall d\in\mathcal{D}$ and polarizations $\vec{p}_d = \vec{p} \ \forall d\in\mathcal{D}$, this simplifies to
  \begin{align}
	  p_{\mathfrak{S}_{\text{id}}} = (2T_I)^N \prod_{s \in \mathcal{S}} \abs{\vec{p} \cdot \vec{\chi}_s(2T_I k)}^2 \abs{\per \mathcal{U}^{(\mathcal{D},\mathcal{S})} }^2.
	  \label{eqn:Supp:ProbabilityIdenticaTimePol}
  \end{align}
 
  Following Aaronson and Arkhipov in \cite{S_aaronson2011computational}, it is reasonable to represent the elements of the interferometer matrix as rational numbers $(x+\ii y)/2^{\text{poly}(N)}$ with integers $x$ and $y$. Analogously, the same holds for the values $T_I$, $\chi_{s}(t_d-t_{0s}-\Delta t)$, $\vec{p}_d \cdot \vec{v}_s$, and $\exp(\ii \omega_s (t_d - t_{0s}-\Delta t))$, with $s\in\mathcal{S}$
  (notice that this representation is reasonable since all these given complex values describing an MBCS experiment can be approximated with high enough precision by choosing $\text{poly}(N)$ sufficiently large).

 We emphasize that, while we will prove the hardness of the MBCS problem assuming the rational representation of these numbers, this proof can easily be generalized to a representation in terms of algebraic numbers which are
 dense in $\mathds{C}$ (more information can be found in App.~\ref{app:Supp:Algebraic}).

  \begin{theorem}[Exponential size of the probabilities]
  Assume that the interferometer matrix and the temporal distributions are given in a rational representation, as described above. Then, independently of the specific choice of $A\in \mathfrak{U}_{M,N}$, $F=\{F_N\}\in \mathcal{F}$, and $\{\vec{\xi}_s\}_{s\in\mathcal{S}}\in F_N$, all the probabilities~\eqref{eqn:Supp:ProbabilitiesExact} in the probability distribution $\mathcal{P}_F$ are at most exponentially small,
  \begin{align}
	  p_{\mathfrak{S}} \geq 2^{-\mathrm{poly}(N)} \ \forall \mathfrak{S}.
	  \label{}
  \end{align}

  Alternatively, the same result holds if we instead assume that the values are represented by algebraic numbers (which are dense in $\mathds{C}$).
  \label{thm:Supp:ExpSize}
 \end{theorem}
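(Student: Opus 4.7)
The plan is to combine the hypothesis of a rational (or algebraic) representation with a standard denominator-bound argument for permanents, in the spirit of the exponential-smallness lemma used by Aaronson and Arkhipov~\cite{S_aaronson2011computational}. First I would fix an arbitrary $A \in \mathfrak{U}_{M,N}$, family $F \in \mathcal{F}$, spectral $N$-tuple $\{\vec{\xi}_s\} \in F_N$, and sample $\mathfrak{S}=(\mathcal{D};\{k_d\},\{\vec{p}_d\})$, and examine the entries of the matrix $\mathcal{T}^{(\mathcal{D},\mathcal{S})}_{\{k_d,\vec{p}_d\}}$ appearing in Eq.~\eqref{eqn:Supp:Tmatrix}. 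By hypothesis, each of the factors making up an entry $\mathcal{U}_{d,s}\,(\vec{p}_d\cdot\vec{\chi}_s(2T_I k_d))$ — namely $\mathcal{U}_{d,s}$, the polarization overlap $\vec{p}_d\cdot \vec{v}_s$, the temporal envelope $\chi_s(2T_I k_d - t_{0s}-\Delta t)$, and the phase $\ee{\ii\omega_s(2T_I k_d - t_{0s}-\Delta t)}$ — is a complex rational of the form $(x+\ii y)/2^{q(N)}$ with $x,y\in\mathds{Z}$ and some fixed polynomial $q$. Hence every entry of $\mathcal{T}^{(\mathcal{D},\mathcal{S})}_{\{k_d,\vec{p}_d\}}$ is likewise a complex rational, with common denominator $2^{q'(N)}$ for a polynomial $q'$.

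Next I would expand the permanent via Eq.~\eqref{eqn:PermDef}. Each of the $N!$ summands is a product of $N$ complex rationals of bit-length $\text{poly}(N)$, and therefore has denominator at most $2^{Nq'(N)}=2^{\text{poly}(N)}$. Because $N!$ terms share this common denominator, the permanent itself reduces to a complex rational
\begin{align}
\per \mathcal{T}^{(\mathcal{D},\mathcal{S})}_{\{k_d,\vec{p}_d\}} = \frac{X+\ii Y}{2^{r(N)}}, \qquad X,Y\in\mathds{Z},\ r\in\text{poly}(N).
\end{align}
If the permanent is nonzero, then at least one of $X,Y$ is a nonzero integer, so $|X+\ii Y|\geq 1$ and $|\!\per \mathcal{T}^{(\mathcal{D},\mathcal{S})}_{\{k_d,\vec{p}_d\}}|^2 \geq 2^{-2r(N)}=2^{-\text{poly}(N)}$. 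Since $T_I>0$ is also rational of bit-length $\text{poly}(N)$, the prefactor satisfies $(2T_I)^N\geq 2^{-\text{poly}(N)}$; multiplying the two bounds yields $p_{\mathfrak{S}}\geq 2^{-\text{poly}(N)}$ for every sample in the support of $\mathcal{P}_F$, which is the content of the theorem.

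The main obstacle — and the one subtlety worth flagging — is the treatment of the zero case: for samples where the permanent vanishes identically the probability is exactly zero, and the stated inequality should be understood as quantifying over samples with $p_{\mathfrak{S}}>0$ (equivalently, over the support of $\mathcal{P}_F$); outcomes outside the support are trivially absent from the MBCS sampling task. For the algebraic-number generalization promised by the theorem, the denominator-bound step must be replaced by a bound on the absolute value of a nonzero complex algebraic number built from $O(N^2)$ algebraic parameters of polynomially bounded height and degree; a Mahler-measure or resultant bound gives exactly such a $2^{-\text{poly}(N)}$ lower bound, which I would import from App.~\ref{app:Supp:Algebraic} rather than reprove here.
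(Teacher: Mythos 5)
Your proof is correct and follows essentially the same denominator-bound argument as the paper's App.~\ref{app:Supp:Binary}: the paper expands $p_{\mathfrak{S}}$ as a double sum over permutation pairs of products of the rationally represented factors and concludes $p_{\mathfrak{S}} = w_{\mathfrak{S}}/2^{\mathrm{poly}(N)}$ with a non-negative integer $w_{\mathfrak{S}}$, which is the same as your bounding of $\per \mathcal{T}^{(\mathcal{D},\mathcal{S})}_{\{k_d,\vec{p}_d\}}$ as a complex rational with controlled denominator, and it likewise invokes Kuperberg's height bound for the algebraic case. Your explicit caveat that the bound applies only to samples with $w_{\mathfrak{S}} \neq 0$ (i.e., on the support of $\mathcal{P}_F$) is a point the paper leaves implicit, and is correctly handled.
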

 
 \begin{proof}
	 This is demonstrated in App.~\ref{app:Supp:Binary}.
 \end{proof}
  
 Furthermore, the probabilities can be represented as (see App.~\ref{app:Supp:Binary})
 \begin{align}
	 p_{\mathfrak{S}} = \frac{w_{\mathfrak{S}}}{2^{\text{poly}(N)}}
	 \label{eqn:Supp:BinaryProbabilities}
 \end{align}
 with an integer $w_{\mathfrak{S}}$, which
 ensures that an MBCS oracle as it will be defined in Definition~\ref{def:Supp:Oracle} (using a random input string $r$ of length at most polynomial in $N$) is able to sample from a probability distribution equal or arbitrarily close to the exact probability distribution in Eq.~\eqref{eqn:Supp:Distribution} where the probability for each sample $\mathfrak{S}$ is defined by Eq.~\eqref{eqn:Supp:ProbabilitiesExact}.

  We can finally define an MBCS oracle as:
 \begin{definition}[Definition of an MBCS oracle in analogy to \cite{S_aaronson2011computational}]
	 Let $\mathcal{O}_{F}$ be an oracle that takes as input an $M\times N$ matrix $A\in \mathfrak{U}_{M,N}$, an $N$-tuple $\{\vec{\xi}_s\}_{s\in\mathcal{S}}\in F_N$ from a given family $F\in \mathcal{F}$, an error bound $\beta>0$ (encoded in unary as $0^{1/\beta}$ to ensure a scaling of resources with $1/\beta)$), and a string $r\in \{0,1\}^{\text{poly}(N)}$ which is its only source of randomness. 
	 Let $\mathcal{P}_{\mathcal{O}_F} = \mathcal{P}_{\mathcal{O}_F}(\{\vec{\xi}_s\},A;\beta)$ be the distribution over the outputs of $\mathcal{O}_F$ if $A$, $\{\vec{\xi}_s\}$, and $\beta$ are fixed but $r$ is uniformly random. Then, $\mathcal{O}_F$ is called an \emph{exact MBCS oracle for the family $F$} if $\mathcal{P}_{\mathcal{O}_F}(\{\vec{\xi}_s\},A;\beta)  \equiv \mathcal{P}_F(\{\vec{\xi}_s\},A)$ for all $N\in \mathds{N}$, $A\in \mathfrak{U}_{M,N}$, $\{\vec{\xi}_s\} \in F_N$, and $\beta>0$. Further, $\mathcal{O}_F$ is an \emph{approximate MBCS oracle for the family $F$} if $\|\mathcal{P}_{\mathcal{O}_F}(\{\vec{\xi}_s\},A;\beta) - \mathcal{P}_F(\{\vec{\xi}_s\},A)\| \leq \beta$ for all $N\in \mathds{N}$, $A\in \mathfrak{U}_{M,N}$, $\{\vec{\xi}_s\}\in F_N$, and $\beta>0$, where $\|\mathcal{P}_{\mathcal{O}_F}(\{\vec{\xi}_s\},A;\beta)-\mathcal{P}_F(\{\vec{\xi}_s\},A)\| \defeq 1/2 \sum_{\mathfrak{S}}\big|\Pr_{\mathcal{P}_{\mathcal{O}_F}}[\mathfrak{S}] - \Pr_{\mathcal{P}_F}[\mathfrak{S}]\big|$.
	  \label{def:Supp:Oracle}
  \end{definition}

  \section{The complexity of the exact MBCS problem}\label{sec:Supp:ExactCase}

  Here, we formally demonstrate the complexity of the exact MBCS problem for the set $\mathcal{E}$ of families $E$ defined as:

  \begin{definition}[Set $\mathcal{E}$ of ``complex'' families in the exact MBCS problem]
	  We define $\mathcal{E} \subseteq \mathcal{F}$ (with $\mathcal{F}$ defined in Definition~\ref{def:Families}) as the subset of families $E \defeq \{E_N\}_{N\in \mathds{N}}$ which fulfill the condition
	  \begin{align}
		  \forall N \in \mathds{N}, \forall \{\vec{\xi}_s\}\in E_N: \qquad 0 < a(s,s') \leq 1 \quad \forall s,s'\in \mathcal{S},
		  \label{eqn:Supp:ConditionExactComplexityContinuous}
	  \end{align}
	  with $a(s,s') \defeq \abs{\vec{v}_s \cdot \vec{v}_{s'}} \int_{-\infty}^{\infty} \d{t} \abs{\chi_{s}(t - t_{0s})} \abs{\chi_{s'}(t-t_{0s'})}$.
	  \label{def:Supp:ComplexFamiliesExact}
  \end{definition}
  We first introduce the following theorem:
  \begin{theorem}
	  Let $E = \{E_N\}_{N\in \mathds{N}} \in \mathcal{E}$.
  It is then ensured that a sequence $\big\{(k_N,\vec{p}_N)\big\}_{N\in \mathds{N}}$ of time-polarization tuples, with $k_N \in \{k_{\text{min}},\dots,k_{\text{max}}\}$ and $\vec{p}_N \in \{\vec{e}_1,\vec{e}_2\}$, exists such that
  \begin{align}
	  \abs{\vec{p}_N \cdot \vec{\chi}_{s}(2T_I k_N)}^2 > 0 \quad \forall s\in \mathcal{S}.
  \end{align}
  For any given $N$, the tuple $(k_N, \vec{p}_N)$ can be found in polynomial time.
	\label{thm:ExponentialSuppression}
\end{theorem}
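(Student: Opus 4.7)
The plan is to decompose the hypothesis $a(s,s')>0$ into its two independent factors---pairwise non-orthogonality $|\vec{v}_s\cdot\vec{v}_{s'}|>0$ of the polarizations and pairwise overlap $\int|\chi_s(t-t_{0s})||\chi_{s'}(t-t_{0s'})|\,\d{t}>0$ of the temporal amplitudes---and to construct $\vec{p}_N$ and $k_N$ separately from these two pieces. The proof is essentially an existence argument refined into a construction; the main obstacle I anticipate is upgrading the existence of a good continuous time into an explicit grid point $2T_I k_N$ that can moreover be produced in polynomial time.

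For the polarization I would argue by a short case analysis on whether some $\vec{v}_s$ is collinear with a basis vector. If no $\vec{v}_s$ is collinear with either of $\vec{e}_1,\vec{e}_2$, then both basis vectors satisfy $\vec{e}_\lambda\cdot\vec{v}_s\neq 0$ for every $s\in\mathcal{S}$ and either can serve as $\vec{p}_N$. Otherwise some $\vec{v}_{s^\ast}$ is parallel to, say, $\vec{e}_1$; then pairwise non-orthogonality yields $|\vec{e}_1\cdot\vec{v}_s|=|\vec{v}_{s^\ast}\cdot\vec{v}_s|>0$ for every $s$, so $\vec{p}_N=\vec{e}_1$ works. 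One observes that the two basis directions cannot both be forbidden this way, since $\vec{v}_{s_1}\parallel\vec{e}_1$ together with $\vec{v}_{s_2}\parallel\vec{e}_2$ would force $a(s_1,s_2)=0$, contradicting the hypothesis.

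For the time I would consider the open sets $J_s\defeq\{t\in\mathds{R}:\chi_s(t-t_{0s})\neq 0\}$. Normalization gives $a(s,s)=1$, so $J_s\neq\emptyset$, and the pairwise overlap piece of the hypothesis gives $J_s\cap J_{s'}\neq\emptyset$ for all $s,s'$. I would then use the narrow-bandwidth structure of Definition~1 to upgrade pairwise intersection to a common intersection, either by modelling each $|\chi_s(t-t_{0s})|$ as being essentially nonzero on a connected interval around $t_{0s}+\Delta t$ and invoking Helly's theorem in $\mathds{R}$ (pairwise intersection of intervals implies common intersection), or by invoking a Paley--Wiener-type analyticity of the $\chi_s$ to conclude that each $J_s$ is open and dense, so that the finite intersection $J\defeq\bigcap_{s\in\mathcal{S}} J_s$ is itself open and nonempty by a Baire-type argument. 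Either route yields a nonempty open set $J$ of admissible continuous times, and the short-integration-time requirement~\eqref{eqn:ConditionShortIntegrationTime} forces $T_I$ to be much smaller than the temporal widths of the $\chi_s$, so that $J$ meets the $2T_I$-grid in at least one point $2T_I k_N$.

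Finally, the polynomial-time claim follows from the rational or algebraic representation of $\{t_{0s},\omega_s,\vec{v}_s,\xi_s\}$ assumed in Section~I of this supplement, which allows each quantity $\vec{p}_N\cdot\vec{\chi}_s(2T_I k)$ to be evaluated in $\mathrm{poly}(N)$ time at any given $k$. A concrete $k_N$ can be read off directly---for instance as the integer closest to $(\bar t+\Delta t)/(2T_I)$, where $\bar t$ is any explicitly computable element of $J$ (e.g.\ the maximum of the left endpoints of the intervals underlying the $J_s$, available in time polynomial in the bit-lengths of the input parameters)---and the property $|\vec{p}_N\cdot\vec{\chi}_s(2T_I k_N)|^2>0$ can be verified in $O(N)$ such evaluations, yielding the required polynomial-time procedure.
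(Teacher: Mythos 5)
Your proof is correct and reaches the same conclusion by essentially the same overall strategy as the paper --- existence of a good tuple from the overlap condition, then brute-force (or direct) search over the finitely many time bins and two polarizations for the polynomial-time claim --- but the paper's own proof is only two sentences long and simply \emph{asserts} the existence step ("the condition directly implies the existence of at least one time bin $k_N$ and one polarization $\vec{p}_N$..."), so almost everything of substance in your write-up is detail the paper omits. Your polarization case analysis is sound: in a two-dimensional polarization space, either no $\vec{v}_s$ is collinear with a basis vector (so both $\vec{e}_1,\vec{e}_2$ work) or some $\vec{v}_{s^*}\parallel\vec{e}_{\lambda}$, in which case $\abs{\vec{e}_{\lambda}\cdot\vec{v}_s}\propto\abs{\vec{v}_{s^*}\cdot\vec{v}_s}>0$ by hypothesis. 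The most valuable observation is your explicit flagging of the pairwise-to-common overlap step: positivity of all pairwise integrals $\int\d{t}\abs{\chi_s}\abs{\chi_{s'}}$ does \emph{not} in general force a common $t$ with all $\abs{\chi_s(t-t_{0s})}>0$ (three envelopes supported on $(0,1)\cup(2,3)$, $(0,1)\cup(4,5)$ and $(2,3)\cup(4,5)$ overlap pairwise but not jointly), and the paper glosses over this entirely. Your two proposed closures --- Helly in $\mathds{R}$ when the supports are intervals, or analyticity/density of the nonzero sets $J_s$ for band-limited spectra --- are both legitimate, but be aware that neither is literally guaranteed by Definition~1, which only demands square-integrable narrow-band spectra; a residual single-peakedness (or compact-support-in-frequency) assumption on the envelopes remains. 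That assumption is implicit in the paper's own statement, so you have not introduced a new gap, only exposed one. A last minor caveat, also shared with the paper: even when $J\defeq\bigcap_s J_s$ is open and nonempty, it could in principle be narrower than the grid spacing $2T_I$, so the step from "a good continuous time exists" to "a good time \emph{bin} exists" tacitly assumes the joint overlap region is not pathologically thin.
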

\begin{proof}[Proof of Theorem~\ref{thm:ExponentialSuppression}]
	As already discussed in the main letter, the condition~\eqref{eqn:Supp:ConditionExactComplexityContinuous} for a fixed $N$ directly implies the existence of at least one time bin $k_N$ and one polarization $\vec{p}_N$ for which all $N$ amplitudes $\vec{p}_N \cdot \vec{\chi}_s(2T_I k_N)$, with $s\in \mathcal{S}$, are non-vanishing.

	Moreover, due to the finite number $L\defeq k_{\text{max}} - k_{\text{min}} + 1$ of time bins in the interval $[k_{\text{min}},k_{\text{max}}]$, $(k_N,\vec{p}_N)$ can be found in polynomial time.
\end{proof}

  We can now formulate the main theorem on the complexity of exact MBCS:
  \begin{theorem}[Main theorem on the complexity of exact MBCS]
	  If $\mathcal{O}_E$ is an exact MBCS oracle for a given family $E\in \mathcal{E}$, then $\mathrm{P}^{\mathrm{\#P}} \subseteq \mathrm{BPP^{NP^{\mathcal{O}_{\mathit{E}}}}}$. This implies that the polynomial hierarchy collapses to the third level if the exact MBCS problem for states in the family $E$ can be solved in polynomial time by a classical computer.
	\label{thm:ComplexityOfExactMBCS}
\end{theorem}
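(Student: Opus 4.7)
The plan is to follow the Aaronson--Arkhipov blueprint for the hardness of exact boson sampling, adapted to the MBCS setting. The key observation is that although samples in $\mathcal{P}_E$ correspond to permanents of the structured matrices $\mathcal{T}^{(\mathcal{D},\mathcal{S})}_{\{k_d,\vec{p}_d\}}$, the sub-class of ``identical'' samples $\mathfrak{S}_{\text{id}}$ (all $k_d$ equal and all $\vec{p}_d$ equal) has probabilities that, by Eq.~\eqref{eqn:Supp:ProbabilityIdenticaTimePol}, factor into a computable spectral prefactor times $\abs{\per \mathcal{U}^{(\mathcal{D},\mathcal{S})}}^2$. An exact MBCS oracle therefore gives indirect access to the same \#P-hard permanents of Haar-submatrices that drive the original boson sampling hardness argument.

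First, I would combine Theorem~\ref{thm:ExponentialSuppression} with the rational (or algebraic) representation underlying Theorem~\ref{thm:Supp:ExpSize} to produce, in polynomial time, a pair $(k_N,\vec{p}_N)$ for which the prefactor $C_N \defeq (2T_I)^N \prod_{s\in\mathcal{S}} \abs{\vec{p}_N \cdot \vec{\chi}_s(2T_I k_N)}^2$ is strictly positive, computable exactly, and bounded below by $2^{-\mathrm{poly}(N)}$. For any rectangular submatrix $A\in\mathfrak{U}_{M,N}$ of a Haar-random $\mathcal{U}$, the probability of the corresponding identical sample then reads $p_{\mathfrak{S}_{\text{id}}} = C_N \abs{\per \mathcal{U}^{(\mathcal{D},\mathcal{S})}}^2$, so any multiplicative estimate of $p_{\mathfrak{S}_{\text{id}}}$ translates directly into a multiplicative estimate of $\abs{\per \mathcal{U}^{(\mathcal{D},\mathcal{S})}}^2$.

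Second, I would apply Stockmeyer's approximate counting theorem to the Boolean predicate $r\mapsto [\mathcal{O}_E(\{\vec{\xi}_s\},A;\beta,r)=\mathfrak{S}_{\text{id}}]$, where $r$ is the polynomial-length random string driving the oracle $\mathcal{O}_E$ of Definition~\ref{def:Supp:Oracle}. This yields a $\mathrm{BPP^{NP^{\mathcal{O}_{\mathit{E}}}}}$ procedure that, for any fixed $\mathfrak{S}_{\text{id}}$, approximates $p_{\mathfrak{S}_{\text{id}}}$ to within a $(1\pm 1/\mathrm{poly}(N))$ multiplicative factor; the lower bound $p_{\mathfrak{S}_{\text{id}}}\geq 2^{-\mathrm{poly}(N)}$ from Theorem~\ref{thm:Supp:ExpSize} is precisely what makes this approximation nontrivial. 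Dividing by the known $C_N$ then yields a multiplicative approximation of $\abs{\per \mathcal{U}^{(\mathcal{D},\mathcal{S})}}^2$. For $N\ll M$, the $N\times N$ submatrix $\mathcal{U}^{(\mathcal{D},\mathcal{S})}$ of a Haar-random unitary is exponentially close in total variation to a matrix of i.i.d.\ complex Gaussians, and Aaronson--Arkhipov show that multiplicative approximation of $\abs{\per X}^2$ for such $X$ is $\#\mathrm{P}$-hard. Chaining the reductions gives $\mathrm{P^{\#P}} \subseteq \mathrm{BPP^{NP^{\mathcal{O}_{\mathit{E}}}}}$, and Toda's theorem $\mathrm{PH}\subseteq \mathrm{P^{\#P}}$ then delivers the stated collapse to the third level whenever $\mathcal{O}_E\in\mathrm{BPP}$.

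The main obstacle is not the high-level reduction---which is essentially that of Aaronson--Arkhipov---but rather checking that it goes through uniformly for any family $E\in \mathcal{E}$, in particular for nonidentical input photons. Concretely, one must verify that (i) Theorem~\ref{thm:ExponentialSuppression} together with the rational/algebraic representation really delivers a pair $(k_N,\vec{p}_N)$ whose prefactor $C_N$ is simultaneously exactly computable and at least $2^{-\mathrm{poly}(N)}$ in size; (ii) Stockmeyer's theorem remains applicable despite the sample probabilities being exponentially small, which is precisely what Theorem~\ref{thm:Supp:ExpSize} and the dyadic form~\eqref{eqn:Supp:BinaryProbabilities} guarantee; and (iii) the total-variation closeness of Haar-submatrices to i.i.d.\ Gaussian matrices---the same technical estimate used in the original boson sampling argument---is insensitive to the distinguishability among the input photons permitted by the defining condition~\eqref{eqn:Supp:ConditionExactComplexityContinuous} of $\mathcal{E}$, since the spectral dependence is entirely absorbed into the prefactor $C_N$.
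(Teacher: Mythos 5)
Your setup---isolating the ``identical'' samples $\mathfrak{S}_{\text{id}}$ whose probabilities factor via Eq.~\eqref{eqn:Supp:ProbabilityIdenticaTimePol} into a computable spectral prefactor times $\abs{\per \mathcal{U}^{(\mathcal{D},\mathcal{S})}}^2$, invoking Theorem~\ref{thm:ExponentialSuppression} to find a tuple $(k_N,\vec{p}_N)$ with nonvanishing prefactor, and running Stockmeyer's algorithm on the Boolean predicate over the oracle's random string---matches the paper's proof. The gap is in where the \#P-hard permanent enters. You take $\mathcal{U}$ Haar-random, note that $\mathcal{U}^{(\mathcal{D},\mathcal{S})}$ is then close to an i.i.d.\ Gaussian matrix, and assert that Aaronson--Arkhipov show multiplicative approximation of $\abs{\per X}^2$ for such Gaussian $X$ is \#P-hard. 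They do not: average-case hardness of estimating Gaussian permanents is precisely the \emph{conjectural} ingredient of their \emph{approximate}-sampling result, not a theorem. What is proven unconditionally is the \emph{worst-case} \#P-hardness of multiplicatively approximating $\abs{\per X}^2$ for an arbitrary given matrix $X$. More importantly, to conclude $\mathrm{P^{\#P}}\subseteq\mathrm{BPP^{NP^{\mathcal{O}_{\mathit{E}}}}}$ you must reduce from an \emph{arbitrary} \#P instance, i.e.\ you must be able to choose which matrix's permanent the oracle's output probabilities encode; a random submatrix of a random unitary gives you no such control, so your chain of reductions does not close.

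The paper supplies the missing step with the Aaronson--Arkhipov embedding lemma: given a worst-case $X$ and $\gamma\defeq 1/\|X\|\geq 2^{-\text{poly}(N)}$, one constructs in polynomial time a unitary $\mathcal{U}$ whose top-left $N\times N$ block is $\gamma X$, takes $\mathcal{D}^*=\mathcal{S}=\{1,\dots,N\}$, and obtains the probability of Eq.~\eqref{eqn:Supp:ProbabilitySpecialSample}, proportional to $\gamma^{2N}\abs{\per X}^2$ with a known, exponentially-lower-bounded prefactor. Stockmeyer then yields a multiplicative estimate of $p_{\mathfrak{S}^*}$, hence of $\abs{\per X}^2$, unconditionally, and Toda's theorem finishes the argument. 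If you replace your Haar-random/Gaussian step with this embedding, the rest of your outline---including points (i) and (ii) of your checklist---goes through; point (iii) is simply not needed in the exact case, since the total-variation closeness of Haar submatrices to Gaussian matrices belongs to the approximate-MBCS proof, where one must hide the target submatrix so that the oracle cannot selectively corrupt the sample of interest.
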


\begin{proof}[Proof of Theorem~\ref{thm:ComplexityOfExactMBCS}]
	If a matrix $X \in \mathds{R}^{N\times N}$ and a parameter $g\in \left[ 1+1/\text{poly}(N),\text{poly}(N) \right]$ are given, it is \#P-hard to approximate $\abs{\per X}^2$ to within a multiplicative factor $g$ \cite{S_aaronson2011computational}.
	
	We now show that, given an MBCS oracle $\mathcal{O}_E$ for a family $E\in \mathcal{E}$ (with $\mathcal{E}$ defined in Definition~\ref{def:Supp:ComplexFamiliesExact}), it is possible to perform this approximation in FBPP\textsuperscript{NP\textsuperscript{$\mathcal{O}_E$}}.

	As shown in \cite{S_aaronson2011computational}, given $\gamma\defeq 1/\|X\| \geq 2^{-\text{poly}(N)}$, it is possible for any $M\geq 2N$ to find in polynomial time a unitary $M \times M$ matrix $\mathcal{U}$ which contains $\gamma X$ as its top-left $N\times N$ submatrix, i.e.
\begin{align}
	\gamma X = [\mathcal{U}_{d,s}]_{\substack{d=1,\dots,N\\s=1,\dots,N}}.
\end{align}

Given $\mathcal{S}\defeq\{1,2,\dots,N\}$, the matrix $A= [\mathcal{U}_{d,s}]_{\substack{d=1,\dots,M\\s\in\mathcal{S}}}\in \mathfrak{U}_{M,N}$ and the spectra $\{\vec{\xi}_s\}_{s\in \mathcal{S}}\in E_N$ induce the probability distribution $\mathcal{P}_E = \mathcal{P}_E(\{\vec{\xi}_s\},A)$. From Theorem~\ref{thm:ExponentialSuppression}, a tuple $(k_N,\vec{p}_N)$ exists such that $\prod_{s\in\mathcal{S}}\abs{\vec{p}_N \cdot \vec{\chi}_s(2T_I k_N)}^2 > 0$. For the sample
	\begin{align}
		\mathfrak{S}^* &\defeq (\mathcal{D}^*; \{\underbrace{k_N,\dots,k_N}_{N \text{ times}}\},\{\underbrace{\vec{p}_N,\dots,\vec{p}_N}_{N\text{ times}}\}),
		\intertext{with}
		\mathcal{D}^* &\defeq \{1,2,\dots,N\},
	\end{align}
	the probability in Eq.~\eqref{eqn:Supp:ProbabilityIdenticaTimePol} becomes
	\begin{align}
		p_{\mathfrak{S}^*} = \Pr_{\mathcal{P}_E}[\mathfrak{S}^*] = \gamma^{2N} (2T_I)^N \prod_{s\in\mathcal{S}} \abs{\vec{p}_N\cdot \vec{\chi}_s(t_N)}^2 \abs{\per X}^2.
		\label{eqn:Supp:ProbabilitySpecialSample}
	\end{align}

	According to Definition~\ref{def:Supp:Oracle}, the exact MBCS oracle $\mathcal{O}_E$ samples from the probability distribution $\mathcal{P}_{\mathcal{O}_E}(\{\vec{\xi}_s\},A;\beta)\equiv\mathcal{P}_E(\{\vec{\xi}_s\},A)$ returning the value $\mathcal{O}_E(\{\vec{\xi}_s\},A,r) = \mathfrak{S}^*$ with probability
	\begin{align}
		p_{\mathfrak{S}^*} = \Pr_{r\in \{0,1\}^{\text{poly}(N)}}[\mathcal{O}_E(\{\vec{\xi}_s\},A,r) = \mathfrak{S}^*].
	\end{align}
	Defining the Boolean function
	\begin{align}
		f: \{0,1\}^{\text{poly}(N)}\rightarrow \{0,1\}, r \mapsto 
		\begin{dcases}
			1 & \mathcal{O}_{E}(\{\vec{\xi}_s\},A,r) = \mathfrak{S}^* \\
			0 & \mathcal{O}_{E}(\{\vec{\xi}_s\},A,r) \neq \mathfrak{S}^*
		\end{dcases},
		\label{eqn:Supp:BooleanFunction}
	\end{align}
	we can also express this probability as
	\begin{align}
		p_{\mathfrak{S}^*} = \frac{1}{2^{\text{poly}(N)}} \sum_{r\in \{0,1\}^{\text{poly}(N)}} f(r).
		\label{eqn:Supp:ProbabilityAsSum}
	\end{align}

	Therefore, we can use Stockmeyer's algorithm \cite{S_Stockmeyer1983} to approximate $p_{\mathfrak{S}^*}$ to within a multiplicative factor $g\in [1+1/\text{poly}(N), \text{poly}(N)]$ in FBPP\textsuperscript{NP\textsuperscript{$\mathcal{O}_E$}} in time polynomial in $N$. Consequently, from Eq.~\eqref{eqn:Supp:ProbabilitySpecialSample}, we can approximate $\abs{\per X}^2$ in $\mathrm{FBPP^{NP^{\mathcal{O}_{\mathit{E}}}}}$ as well. Since performing such an approximation is a \#P-hard problem \cite{S_aaronson2011computational}, $\mathrm{P}^{\#\mathrm{P}} \subseteq \mathrm{BPP}^{\mathrm{NP}^{\mathcal{O}_E}}$. If the MBCS problem could be solved in polynomial time by a classical computer this would imply $\mathrm{P^{\#P}}\subseteq \mathrm{BPP^{NP}}$, which, by Toda's theorem \cite{S_Toda1991}, would lead to a collapse of the polynomial hierarchy to the third level.
\end{proof}
	
	\section{The approximate MBCS problem}\label{sec:Supp:ApproximateCase}
	We consider, for simplicity, a multiboson correlation experiment with polarization-insensitive detectors and states from a family $R:=\{R_N\}_{N\in \mathds{N}}\in \mathcal{F}$ (with $\mathcal{F}$ defined in Definition~\ref{def:Families}) of input states $\{\ket{1[\vec{\xi}_s]}_s\}_{s\in \mathcal{S}}$ defined by input spectra
	\begin{align}
		\{\vec{\xi}_s(\omega)\}_{s\in \mathcal{S}} = \left\{ \vec{v} \frac{1}{\sqrt{\pi \Delta\omega}} \operatorname{sinc} \left( \frac{\omega-\omega_s}{\Delta \omega} \right) \ee{\ii \omega t_{0}} \right\}_{s\in\mathcal{S}}
		\label{}
	\end{align}
	with equal emission times $t_{0s} = t_0$, equal polarizations $\vec{v}_s = \vec{v}$, and equal bandwidths $\Delta\omega_{s} = \Delta\omega$ but different colors $\omega_s$. For these spectra, the elements of the $N$-photon interference matrix $a(s,s') \defeq \abs{\vec{v}_s \cdot \vec{v}_{s'}} \int_{-\infty}^{\infty} \d{t} \abs{\chi_{s}(t - t_{0s})} \abs{\chi_{s'}(t-t_{0s'})}$ fulfill the condition
	\begin{align}
		a(s,s')  \simeq 1 \ \forall s,s'
		\label{}
	\end{align}
	of full temporal overlap.
	As in section~\ref{sec:Supp:ExactCase}, we discretize the detection-time axis into time bins of width $2T_I$. Here, as described in the main letter, in each output port a detection can only occur in the time bins $t_k \defeq 2T_I k \in T \defeq [t_0+\Delta t - 1/\Delta\omega,t_0+\Delta t + 1/\Delta \omega]$, with $k\in \{k_{\text{min}},k_{\text{min}}+1,\dots,k_{\text{max}}\}$, where $k_{\text{min}} \defeq (t_0+\Delta t - 1/\Delta\omega + T_I)/(2T_I)$ and $k_{\text{max}} \defeq (t_0 + \Delta t + 1/\Delta\omega - T_I)/(2T_I)$. Therefore, the total number of time bins in which a detection is possible is $L\defeq1/(T_I\Delta\omega)$.
	Given the output probability distribution
	\begin{align}
		\mathcal{P}_R \defeq \mathcal{P}_R(\{\vec{\xi}_s\}_{s\in \mathcal{S}}\in R_N, A \in \mathfrak{U}_{M,N}),
		\label{}
	\end{align}
	for a given sample $\mathfrak{S} \defeq (\mathcal{D};\{k_d\})$, the probability
	\begin{align}
		p_{\mathfrak{S}} \defeq \Pr_{\mathcal{P}_R}[\mathfrak{S}] = 
	  P^{(\mathcal{D},\mathcal{S})}_{\{t_{k_d}\}} =
	  (T_I\Delta\omega)^N \abs{\per \Big( [\mathcal{U}^{(\mathcal{D},\mathcal{S})}_{d,s} \ee{\ii \omega_s t_{k_d}}]_{\substack{d\in\mathcal{D}\\s\in\mathcal{S}}}\Big)}^2,
		\label{}
	\end{align}
	derived in the main letter, takes the form
  \begin{align}
	  p_{\mathfrak{S}} = 
	  \frac{1}{L^N} \abs{\per \Big( [\mathcal{U}^{(\mathcal{D},\mathcal{S})}_{d,s} \ee{2 \ii T_I \omega_s k_d}]_{\substack{d\in\mathcal{D}\\s\in\mathcal{S}}}\Big)}^2.
	  \label{eqn:Supp:Probabilities}
  \end{align}

  We can now formulate the following theorem on the computational power of an approximate MBCS oracle for the family $R$:
  \begin{theorem}[Computational power of an approximate MBCS oracle for the family $R$ of input states]
	  Let $\mathcal{O}_R$ be an approximate oracle for the MBCS problem for the family $R$ and $\varepsilon,\delta>0$ given error bounds.
	  It is then possible to approximate the modulus square of the permanent of a random Gaussian $N\times N$ matrix to within an additive error $\pm \varepsilon N!$ and with success probability larger than $1-\delta$ in $\mathrm{FBPP^{NP^{\mathcal{O}_{\mathit{R}}}}}$ in time polynomial in $N$, $1/\varepsilon$, and $1/\delta$.
	  \label{thm:PowerApproximateOracle}
  \end{theorem}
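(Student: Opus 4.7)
The plan is to adapt the approximate-sampling hardness reduction of Aaronson and Arkhipov (AA)~\cite{S_aaronson2011computational} to the MBCS oracle $\mathcal{O}_R$, accounting for the phase factors $\ee{2\ii T_I \omega_s k_d}$ in Eq.~\eqref{eqn:Supp:Probabilities} and the $L^N$-fold blow-up of the sample space due to the time-bin discretization. Given a target matrix $X\in \mathds{C}^{N\times N}$ with i.i.d.\ standard complex Gaussian entries, I first embed the rescaled copy $\gamma X$, with $\gamma = 1/\|X\|_{\text{op}}$, as the top-left $N\times N$ block of a Haar-random $M\times M$ unitary $\mathcal{U}$, choosing $M = \text{poly}(N,1/\varepsilon,1/\delta)$ large enough that AA's hiding lemma applies, so that the conditional distribution of $\mathcal{U}$ lies within total variation $\delta/\text{poly}(N)$ of the pure Haar measure. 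Feeding $A \defeq [\mathcal{U}_{d,s}]_{d=1,\dots,M;\, s\in\{1,\dots,N\}}$ and the family-$R$ spectra to $\mathcal{O}_R$ then yields samples from a distribution within total variation $\beta$ of $\mathcal{P}_R$.

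Next I single out a target sample $\mathfrak{S}^* \defeq (\mathcal{D}^*;\{k_d^*\})$ with $\mathcal{D}^* = \{1,\dots,N\}$ and an arbitrary time tuple. By Eq.~\eqref{eqn:Supp:Probabilities} its exact probability is $p_{\mathfrak{S}^*} = L^{-N}\abs{\per Y}^2$, with $Y_{d,s} = \gamma X_{d,s}\,\ee{2\ii T_I \omega_s k_d^*}$. Since i.i.d.\ complex Gaussians are invariant under deterministic diagonal phase rotations (the point established in App.~C of the Supplemental Material), $Y/\gamma$ is distributed identically to $X$, so a multiplicative estimate of $p_{\mathfrak{S}^*}$ with accuracy $1+1/\text{poly}(N)$, rescaled by $L^N/\gamma^{2N}$, is an estimate of $\abs{\per X'}^2$ for a fresh Gaussian $X'$ drawn from the same distribution as $X$. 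Such a multiplicative estimate is produced in $\mathrm{FBPP^{NP^{\mathcal{O}_{\mathit{R}}}}}$ by applying Stockmeyer's algorithm~\cite{S_Stockmeyer1983} to the Boolean indicator of the event $\mathcal{O}_R(\{\vec{\xi}_s\},A,r)=\mathfrak{S}^*$, exactly as in Eqs.~\eqref{eqn:Supp:BooleanFunction}--\eqref{eqn:Supp:ProbabilityAsSum} of the exact-case proof.

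The core difficulty is that $\mathcal{O}_R$ may place its $\beta$-budget of error adversarially, conceivably concentrating it on the very sample $\mathfrak{S}^*$ that encodes $\abs{\per X}^2$. I handle this by the AA hiding randomization in two directions: (i) a uniform random permutation of the columns of $\mathcal{U}$ preserves the Haar distribution and makes the target port subset $\mathcal{D}^*$ uniform over all $\binom{M}{N}$ no-collision subsets, and (ii) $\{k_d^*\}$ is drawn uniformly over the $L^N$ time tuples, each of which produces a phase-rotated Gaussian block indistinguishable to the oracle from every other by the phase-invariance argument of App.~C. A Markov argument over these $\binom{M}{N}L^N$ randomized targets then bounds the absolute oracle error on $p_{\mathfrak{S}^*}$ by $O(\beta/[\delta\binom{M}{N}L^N])$ with probability at least $1-\delta/4$. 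The standard AA counting, combining $\gamma^{-2N}$ with $M^N/\binom{M}{N} = O(N!)$, converts this into an additive error of order $N!\beta/\delta$ on $\abs{\per X'}^2$; picking $\beta = \varepsilon\delta/\text{poly}(N)$ and union-bounding the Stockmeyer, hiding, and Markov failure events delivers the claimed accuracy $\pm\varepsilon N!$ with probability $\geq 1-\delta$ in time polynomial in $N$, $1/\varepsilon$, and $1/\delta$.

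The step I expect to be the main obstacle is making the hiding argument watertight across the $L^N$ time tuples. Heuristically the extra $L^N$ samples only dilute the oracle's error, but the phases $\ee{2\ii T_I \omega_s k_d}$ have the non-separable multiplicative structure $\omega_s k_d$, and one must verify that they do not induce correlations between oracle errors on distinct time tuples that would defeat the Markov averaging. This is the natural extension of AA's column-permutation hiding to the time-bin label, and it rests on the invariance of i.i.d.\ complex Gaussian matrices under every fixed independent diagonal phase rotation, which is precisely what is established in App.~C. Once this invariance is applied tuple-by-tuple, the remaining work is a routine tracking of polynomial factors among $M$, the Stockmeyer precision, and $\beta$ to align the total additive error with the target bound $\varepsilon N!$.
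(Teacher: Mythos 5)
Your overall strategy is the same as the paper's: hide a Gaussian block in a Haar-random $A$, single out a random sample $\mathfrak{S}^*=(\mathcal{D}^*;\{k_d^*\})$, estimate $q_{\mathfrak{S}^*}$ multiplicatively with Stockmeyer, and control the oracle's adversarial error by a Markov argument over the $\binom{M}{N}L^N$ equally likely hiding positions. However, there are two concrete gaps. First, you apply the phase invariance in the wrong direction: embedding $X$ itself makes the physical probability proportional to $\abs{\per Y}^2$ with $Y_{d,s}\propto X_{d,s}\,\ee{2\ii T_I\omega_s k_d^*}$, and you concede that this yields an estimate of $\abs{\per X'}^2$ for ``a fresh Gaussian $X'$'' rather than for the matrix $X$ you were handed. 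That does not solve the GPE task needed for the hardness reduction, since $\abs{\per X'}^2\neq\abs{\per X}^2$. The fix — which is exactly what the paper does — is to pre-rotate: hide $\tilde{X}_{i,j}\defeq X_{i,j}\,\ee{-2\ii T_I\omega_{s_j}k^*_{d_i}}$ (still i.i.d.\ Gaussian by App.~C, so the hiding lemma applies), so the interferometer's phases cancel and $p_{\mathfrak{S}^*}=\abs{\per X}^2/(L^N M^N)$ for the original $X$.

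Second, your embedding is internally inconsistent. The scaling $\gamma=1/\|X\|_{\mathrm{op}}$ with the block placed in the top-left corner is the \emph{exact}-case construction; it is incompatible with the hiding lemma, which requires the block to be $\tilde{X}/\sqrt{M}$ (so its entries have the variance of a Haar submatrix) and to sit at a \emph{uniformly random} subset of rows — a fixed corner position is precisely what lets an adversarial oracle corrupt the one sample you care about. Relatedly, permuting the \emph{columns} of $\mathcal{U}$ permutes input ports and does not randomize the output-port subset $\mathcal{D}^*$; the randomization must be over rows. Your final bookkeeping ($M^N/\binom{M}{N}=O(N!)$ together with $\gamma^{-2N}$) silently assumes $\gamma=1/\sqrt{M}$, confirming the mix-up. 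Finally, the issue you flag as the main obstacle — correlations of oracle errors across time tuples — is not one: the Markov step needs only that the \emph{average} of $\abs{p_{\mathfrak{S}}-q_{\mathfrak{S}}}$ over all samples is at most $2\beta/(\binom{M}{N}L^N)$ (immediate from the variation-distance bound) and that $\mathfrak{S}^*$ is uniform and hidden from the oracle; no independence across tuples is required, and the paper's App.~D handles it in two lines.
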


  \begin{proof}[Proof of Theorem~\ref{thm:PowerApproximateOracle}]
	  Let $X$ be a complex $N \times N$ matrix whose entries are randomly picked according to a standard complex normal probability distribution $\mathcal{N}(0,1)_{\mathds{C}}$.
	  We will adapt the arguments found in~\cite{S_aaronson2011computational} to give an algorithm in $\mathrm{FBPP}^{\mathrm{NP}^{\mathcal{O}_R}}$ that performs the required approximation in polynomial time. 
	  The main idea is to introduce an $M\times N$ matrix $A=[\mathcal{U}_{d,s}]_{\substack{d=1,\dots,M\\s\in \mathcal{S}}}\in \mathfrak{U}_{M,N}$ (as in the exact case, $A$ is a rectangular submatrix of a Haar-random unitary interferometer matrix $\mathcal{U}$ corresponding to the input configuration $\mathcal{S} \defeq \{1,2,\dots,N\}$) such that a specific sample $\mathfrak{S}^{*}\defeq(\mathcal{D}^*,\{k_d^*\})$ occurs with probability $p_{\mathfrak{S}^*} \propto \abs{\per X}^2$, which can be estimated using the approximate MBCS oracle defined in Definition~\ref{def:Supp:Oracle}.

	  \begin{enumerate}
		  \item In order to rule out the possibility that the oracle can willingly sabotage the output probability of the sample $\mathfrak{S}^{*}$ in which we are interested, $\mathfrak{S}^*$ has to be picked randomly. Therefore, the time sample $\{k_d^*\}$ is generated by randomly picking $N$ time bins $t_{k^*_d} = 2T_I k_d^* \in T$ according to a uniform probability distribution. 
  Given the structure of the probabilities in Eq.~\eqref{eqn:Supp:Probabilities}, it is useful to define a matrix $\tilde{X}$ which incorporates the inverse of the phase factors corresponding to the randomly chosen time sample $\{k_d^*\}$, i.e.
  \begin{align}
	  \tilde{X}_{i,j}\defeq X_{i,j}\ee{-2 \ii T_I \omega_{s_j} k^*_{d_i}},
  \end{align}
  where $d_i$ and $s_i$ are the $i$-th elements of $\mathcal{D}$ and $\mathcal{S}$, respectively.
  Since the elements of $X$ are i.i.d. Gaussian random variables, the same holds for the elements of $\tilde{X}$, as shown in App.~\ref{app:Supp:IIDGaussians}. 
	 
  \item An $M\times N$ matrix $A\in \mathfrak{U}_{M,N}$ is generated with the following properties: $A$ is distributed like the submatrix $[\mathcal{U}_{d,s}]_{\substack{d=1,\dots,M\\s\in \mathcal{S}}}$ of a Haar-distributed $M\times M$ unitary matrix $\mathcal{U}$ and it contains $\tilde{X}' \defeq \tilde{X}/\sqrt{M}$ as a uniformly random submatrix. 
	  As shown in \cite{S_aaronson2011computational}, since $\tilde{X}$ is a Gaussian matrix this is possible to achieve in BPP\textsuperscript{NP} with a failure probability 
	  \begin{align}
		  \Pr[\text{hiding failed}] \leq \frac{\delta}{4},
		  \label{eqn:Supp:ProbabilitySuccessHiding}
		  \intertext{provided}
			M = K \frac{N^5}{\delta}\log^2\frac{N}{\delta}
			\label{eqn:Supp:MinimumMApproximate}
	  \end{align}
	  with a sufficiently large constant $K$.
	  The random output-port sample that corresponds to the position of $\tilde{X}' = [\mathcal{U}_{d,s}]_{\substack{d\in \mathcal{D}^*\\ s\in\mathcal{S}}}$ inside of $A$ is denoted as $\mathcal{D}^*$.
	  Thus, for the random sample $\mathfrak{S}^*\defeq(\mathcal{D}^*;\{k_d^*\})$, Eq.~\eqref{eqn:Supp:Probabilities} becomes
  \begin{align}
	  p_{\mathfrak{S}^*} = \frac{1}{L^N} \abs{\per\left( \big[M^{-1/2}\tilde{X}_{i,j}\ee{2\ii T_I \omega_{s_j}k^*_{d_i}}\big]_{\substack{i=1,\dots,N\\j=1,\dots,N}} \right)}^2  = \frac{1}{L^N M^N} \abs{\per X}^2.
	  \label{eqn:Supp:DetectionProbabilitySpecialSample}
  \end{align}
  \item  While $p_{\mathfrak{S}^*}$ describes the probability for the sample $\mathfrak{S}^{*}$ within the exact probability distribution $\mathcal{P}_R = \mathcal{P}_R(\{\vec{\xi}_s\},A)$, we define
  \begin{align}
	  q_{\mathfrak{S}^{*}} \defeq \Pr_{\mathcal{P}_{\mathcal{O}_R}}[\mathfrak{S}^{*}] = \Pr_{r\in\{0,1\}^{\text{poly}(N)}}\left[ \mathcal{O}_R(\{\vec{\xi}_s\},A,r;\beta) = \mathfrak{S}^* \right]
  \end{align}
  as the probability for the sample $\mathfrak{S}^{*}$ within the probability distribution $\mathcal{P}_{\mathcal{O}_R} = \mathcal{P}_{\mathcal{O}_R}(\{\vec{\xi}_s\},A;\beta)$ of the approximate MBCS oracle $\mathcal{O}_R$ (as shown in App.~\ref{app:Supp:Binary}, all probabilities $p_{\mathfrak{S}}$ are at most exponentially small and it is thus ensured that an oracle $\mathcal{O}_R$ as defined in Definition~\ref{def:Supp:Oracle} can sample according to a probability distribution $\mathcal{P}_{\mathcal{O}_R}(\{\vec{\xi}_s\},A;\beta)$ arbitrarily close to $\mathcal{P}_R(\{\vec{\xi}_s\},A)$).

  In analogy to Eqs.~\eqref{eqn:Supp:BooleanFunction} and \eqref{eqn:Supp:ProbabilityAsSum}, we can also define a Boolean function
  \begin{align}
	  f: \{0,1\}^{\text{poly}(N)} \rightarrow \{0,1\}, r \mapsto  
	  \begin{dcases}
		  1 & \mathcal{O}_R(\{\vec{\xi}_s\},A,r;\beta) = \mathfrak{S}^* \\
		  0 & \mathcal{O}_R(\{\vec{\xi}_s\},A,r;\beta) \neq \mathfrak{S}^*
	  \end{dcases},
  \end{align}
  and write
  \begin{align}
	  q_{\mathfrak{S}^*} = \frac{1}{2^{\text{poly}(N)}} \sum_{r\in \{0,1\}^{\text{poly}(N)}} f(r).
  \end{align}
  This makes it evident that Stockmeyer's algorithm \cite{S_Stockmeyer1983} can be used to find a value $\tilde{q}_{\mathfrak{S}^*}$ in $\mathrm{FBPP^{NP^{\mathcal{O}_{\mathit{R}}}}}$ that approximates $q_{\mathfrak{S}^*}$ to within a multiplicative factor $(1+\alpha) \in [1+ 1/\text{poly}(N), \text{poly}(N)]$, such that
  \begin{align}
	  \Pr \left[ q_{\mathfrak{S}^*}/(1+\alpha) \leq \tilde{q}_{\mathfrak{S}^*} \leq (1+\alpha) q_{\mathfrak{S}^*} \right] \geq 1 - \frac{1}{2^M}
	  \label{eqn:Supp:StockmeyerSuccessProbability}
  \end{align}
  in time polynomial in $M$ and $1/\alpha$ \cite{S_Stockmeyer1983}.
	\item By using Definition~\ref{def:Supp:Oracle} of the approximate MBCS oracle and Eq.~\eqref{eqn:Supp:StockmeyerSuccessProbability} for $\alpha =\varepsilon \delta/16$, we show in App.~\ref{app:Supp:FailureBound} that the estimate $L^N M^N \tilde{q}_{\mathfrak{S}^*}$ of the squared permanent of $X$
		satisfies the condition
		\begin{align}
			\Pr\left[ \abs{\abs{\per X}^2 - L^N M^N \tilde{q}_{\mathfrak{S}^*}} > \varepsilon N! \right]< \frac{\delta}{2} + \frac{1}{2^M}.
			\label{eqn:Supp:MainInequality}
		\end{align}
  \end{enumerate}
  Adding to Eq.~\eqref{eqn:Supp:MainInequality} the probability of failure for the hiding of $\tilde{X}'$ from Eq.~\eqref{eqn:Supp:ProbabilitySuccessHiding} and recalling Eq.~\eqref{eqn:Supp:MinimumMApproximate} which ensures $2^{-M} < \delta/4$, we find that the total probability of failure is smaller than $\delta$, as required.

		Further, the algorithm runs in a time polynomial in $N$, $1/\delta$, and $1/\varepsilon$ since the hiding procedure in step 2 is polynomial in time with respect to $N$ and $1/\delta$ and the running time of Stockmeyer's algorithm (used in step 3) is polynomial in $N$ and $1/\alpha$, where $\alpha=\varepsilon\delta/16$ was chosen.
  \end{proof}

  In \cite{S_aaronson2011computational}, the authors argue that approximating the modulus square of the permanent of a Gaussian matrix is a \#P-hard problem if two reasonable conjectures are true. Under this assumption, the following theorem holds:

  \begin{theorem}
	  Let $\mathcal{O}_R$ be an approximate MBCS oracle for the family $R$. If the approximate MBCS problem for $R$ can be solved in polynomial time by a classical computer, the polynomial hierarchy collapses to the third level.
	  \label{thm:ComplexityOfApproximateCase}
  \end{theorem}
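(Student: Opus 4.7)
The plan is to chain Theorem~\ref{thm:PowerApproximateOracle} with the standard Aaronson--Arkhipov reduction that links approximation of Gaussian permanents to $\#\mathrm{P}$-hardness, and finally invoke Toda's theorem to collapse the polynomial hierarchy. First I would assume, for contradiction, that a classical polynomial-time algorithm $\mathcal{O}_R$ solves approximate MBCS for the family $R$; then any $\mathrm{FBPP^{NP^{\mathcal{O}_{\mathit{R}}}}}$ computation is actually in $\mathrm{FBPP^{NP}}$, since the oracle can be simulated directly by the classical algorithm.

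Next I would apply Theorem~\ref{thm:PowerApproximateOracle} to conclude that, for every $\varepsilon,\delta>0$ inverse-polynomially small in $N$, there is an $\mathrm{FBPP^{NP}}$ procedure running in time $\mathrm{poly}(N,1/\varepsilon,1/\delta)$ that on input a matrix $X\in\mathds{C}^{N\times N}$ whose entries are drawn i.i.d.\ from $\mathcal{N}(0,1)_{\mathds{C}}$ outputs a number $\tilde{P}$ such that $\bigl|\,|\!\per X|^2-\tilde{P}\bigr|\le\varepsilon\,N!$ with probability at least $1-\delta$ over the choice of $X$ and the internal randomness. This is precisely the ``$\vert\mathrm{GPE}\vert_{\pm}^2$'' estimation task introduced in~\cite{S_aaronson2011computational}.

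I would then invoke the two conjectures put forward by Aaronson and Arkhipov: the Permanent-of-Gaussians Conjecture (approximating $|\!\per X|^2$ to additive error $\pm\varepsilon N!$ with high probability over Gaussian $X$ is $\#\mathrm{P}$-hard) and the Permanent Anti-Concentration Conjecture (which underpins the reduction from worst-case multiplicative approximation to average-case additive approximation). Granted these, the previous paragraph gives $\mathrm{P^{\#P}}\subseteq\mathrm{BPP^{NP}}$. By Toda's theorem~\cite{S_Toda1991}, $\mathrm{PH}\subseteq\mathrm{P^{\#P}}\subseteq\mathrm{BPP^{NP}}\subseteq\Sigma_3^{\mathrm{P}}$, so the polynomial hierarchy collapses to the third level, contradicting standard complexity-theoretic beliefs.

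The only genuinely non-trivial step is Theorem~\ref{thm:PowerApproximateOracle}, which has already been established; once it is in hand the present statement is a one-line corollary modulo the two Aaronson--Arkhipov conjectures. The main subtlety I would highlight is therefore expository rather than technical: emphasising that the reduction is unconditional in the oracle formulation (Theorem~\ref{thm:PowerApproximateOracle} holds without assumptions), and that the collapse of $\mathrm{PH}$ is conditional on the same complexity conjectures required for the original boson sampling hardness argument, so that approximate MBCS for photons of different colours is \emph{at least as hard} as the original approximate boson sampling problem.
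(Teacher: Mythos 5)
Your proposal is correct and follows essentially the same route as the paper: it chains Theorem~\ref{thm:PowerApproximateOracle} with the (conjectural) \#P-hardness of additively approximating $\abs{\per X}^2$ for Gaussian $X$ from \cite{S_aaronson2011computational}, replaces the oracle by the assumed classical polynomial-time sampler to get $\mathrm{P^{\#P}}\subseteq\mathrm{BPP^{NP}}$, and invokes Toda's theorem for the collapse. Your explicit naming of the two conjectures and the remark that Theorem~\ref{thm:PowerApproximateOracle} itself is unconditional are accurate refinements of what the paper states immediately before the theorem.
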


  \begin{proof}[Proof of Theorem~\ref{thm:ComplexityOfApproximateCase}]
	  Theorem~\ref{thm:PowerApproximateOracle} states that it is possible to approximate the modulus square of a permanent in $\mathrm{FBPP^{NP^{\mathcal{O}_{\mathit{R}}}}}$, given an oracle $\mathcal{O}_R$ for the family $R$. If this approximation is indeed \#P-hard, as argued in \cite{S_aaronson2011computational}, it follows that $\mathrm{P^{\#P}} \subseteq \mathrm{BPP^{NP^{\mathcal{O}_{R}}}}$. Further, if the MBCS problem for states of this family can be solved in polynomial time by a classical computer, then $\mathrm{P^{\#P}} \subseteq \mathrm{BPP^{NP}}$ which implies, by Toda's theorem \cite{S_Toda1991}, that the polynomial hierarchy collapses to the third level.
  \end{proof}
  
  \appendix
\renewcommand{\theequation}{S\Alph{section}\arabic{equation}}
  \section{Exponential lower bound on the MBCS probabilities}\label{app:Supp:Binary}
  
  Here, we show that the assumption of a rational representation $(x+\ii y)/2^{\text{poly}(N)}$ (with integers $x,y$) of the elements of the interferometer matrix and of the values of the temporal distributions for all possible time bins results in an exponential lower bound on the non-vanishing detection probabilities $p_{\mathfrak{S}}$.

  Using the expression~\eqref{eqn:Supp:ProbabilitiesExact}, the definitions of a matrix permanent and of the matrix $\mathcal{T}^{(\mathcal{D},\mathcal{S})}_{\{k_d,\vec{p}_d\}}$ in Eq.~\eqref{eqn:Supp:Tmatrix}, and the explicit expression Eq.~\eqref{eqn:Supp:TemporalSpectra} for the functions $\vec{\chi}_s(t)$, we find
  \begin{align}
	  p_{\mathfrak{S}} &= (2T_I)^N\sum_{\sigma,\sigma'} \bigg[ \phantom{\times} \prod_{d\in \mathcal{D}} \mathcal{U}_{d,\sigma'(d)}^{*} \mathcal{U}_{d,\sigma(d)} (\vec{p}_d \cdot \vec{v}_{\sigma'(d)})^*  (\vec{p}_d \cdot \vec{v}_{\sigma(d)})\\ 
		  &\phantom{=\Sigma \Big[\times}\times \prod_{d\in\mathcal{D}} (\chi_{\sigma'(d)}(t_d-t_{0\sigma'(d)}-\Delta t))^* \chi_{\sigma(d)}(t_d - t_{0\sigma(d)}-\Delta t) \\
	  &\phantom{=\Sigma \Big[\times}\times \prod_{d\in\mathcal{D}} \ee{-\ii \omega_{\sigma'(d)}(t_d - t_{0\sigma'(d)}-\Delta t)} \ee{\ii \omega_{\sigma(d)}(t_d - t_{0\sigma(d)} - \Delta t)} \bigg] 
	  \label{}
  \end{align}
  Since all $8N+1$ numbers $T_I$, $\mathcal{U}_{d,s}$ and c.c., $(\vec{p}_d\cdot \vec{v}_s)$ and c.c., $\chi_s(t_d-t_{0s}-\Delta t)$ and c.c., $\exp[\ii \omega_s (t_d-t_{0s}-\Delta t)]$ and c.c. are represented as $(x+\ii y)/2^{\text{poly}(N)}$ ($x$,$y$ integers), it is immediately clear that the probability has the form
  \begin{align}
	  p_{\mathfrak{S}}  = \frac{1}{2^{\text{poly}(N)}} w_{\mathfrak{S}},
	  \label{eqn:Supp:ProbabilitiesBinary}
  \end{align}
  with a non-negative integer $w_{\mathfrak{S}}$ and is therefore at most exponentially small. This guarantees that Stockmeyer's algorithm can be used to approximate these probabilities.

  The probabilities are also at most exponentially small if these values are represented by algebraic numbers. Indeed, it was shown in \cite{S_Kuperberg2015} that
  \begin{align}
	  p_{\mathfrak{S}} \geq 2^{-r(N)},
	  \label{eqn:Supp:AppB:LowerBoundProbs}
  \end{align}
 with a polynomial $r(N)$.

  \section{Using algebraic numbers instead of rational numbers with polynomial precision}\label{app:Supp:Algebraic}
  Here, we will show that, instead of assuming that the values $\mathcal{U}_{d,s}$, $T_I$, $\vec{\chi}_s(t_d-t_{0s}-\Delta t)$, and $\exp[\ii \omega_{s} (t_d-t_{0s}-\Delta t)]$ are represented as rational numbers $(x+\ii y)/2^{\text{poly}(N)}$ as in \cite{S_aaronson2011computational}, the hardness of MBCS sampling can also be proven if these values are represented by algebraic numbers \footnote{This possibility was brought to our attention by S. Aaronson in a private communication.}. This is appealing because it is possible to find an arbitrarily close algebraic approximation of any complex number since the algebraic numbers are dense in $\mathds{C}$. 

  We emphasize that, in general, an MBCS oracle as defined in Definition~\ref{def:Supp:Oracle} now cannot sample from the exact probability distribution $\mathcal{P}_F$ in Eq.~\eqref{eqn:Supp:Distribution} any more. However, as we will demonstrate in the following, the hardness proof in section~\ref{sec:Supp:ExactCase} is still valid if we define an ``exact'' MBCS oracle as an oracle which samples from a probability distribution $\mathcal{P}_{O_F}$ with
  \begin{align}
	  \| \mathcal{P}_{\mathcal{O}_F}(\{\vec{\xi}_s\},A) - \mathcal{P}_F(\{\vec{\xi}_s\},A) \| \leq 2^{-s(N)},
	  \label{eqn:Supp:AppB:UpperBoundDist}
  \end{align}
  where the polynomial $s(N)$ is assumed to dominate the polynomial $r(N)$ from Eq.~\eqref{eqn:Supp:AppB:LowerBoundProbs} ($s(N)-r(N)$ grows monotonically, and $s(1)>r(1)+2$). Such an approximation of the exact probability distribution can however still be achieved by an oracle as defined in Definition~\ref{def:Supp:Oracle}.

  Defining the probabilities $q_{\mathfrak{S}} \defeq \Pr_{\mathcal{P}_{O_F}}[\mathfrak{S}]$, Eq.~\eqref{eqn:Supp:AppB:UpperBoundDist} implies that
  \begin{align}
	  \abs{p_{\mathfrak{S}} - q_{\mathfrak{S}}} &\leq 2 \cdot 2^{-s(N)} \\
	  \Leftrightarrow p_{\mathfrak{S}} - 2\cdot 2^{-s(N)} &\leq q_{\mathfrak{S}} \leq p_{\mathfrak{S}} + 2\cdot 2^{-s(N)} \\
	  \Rightarrow p_{\mathfrak{S}}\left(1 - 2\frac{2^{-s(N)}}{p_{\mathfrak{S}}}\right) &\leq q_{\mathfrak{S}} \leq p_{\mathfrak{S}}\left(1 + 2\frac{2^{-s(N)}}{p_{\mathfrak{S}}}\right),
	  \intertext{which with Ineq.~\eqref{eqn:Supp:AppB:LowerBoundProbs} becomes}
	   p_{\mathfrak{S}}\left(1 - 2\frac{2^{-s(N)}}{2^{-r(N)}}\right) &\leq q_{\mathfrak{S}} \leq p_{\mathfrak{S}}\left(1 + 2\frac{2^{-s(N)}}{2^{-r(N)}}\right) \\
	  \Rightarrow p_{\mathfrak{S}}\left(1 - 2^{-(s(N)-r(N)-1)}\right) &\leq q_{\mathfrak{S}} \leq p_{\mathfrak{S}}\left(1 + 2^{-(s(N)-r(N)-1)}\right) .
	  \label{}
  \end{align}
  Using the inequality $1-x \geq 1/(1+2x)$ for $0<x < 1/2$, we find that
  \begin{align}
	  	 p_{\mathfrak{S}}\frac{1}{1 + 2^{-(s(N)-r(N)-2)}} &\leq q_{\mathfrak{S}} \leq p_{\mathfrak{S}}\left(1 + 2^{-(s(N)-r(N)-2)}\right),
	  \label{eqn:Supp:ExpClose}
  \end{align}
  i.e. that $q_{\mathfrak{S}}$ is multiplicatively close to $p_{\mathfrak{S}}$ to within a factor $1+2^{-\text{poly}(N)}$.

  The approximation of $p_{\mathfrak{S}}$ to within a multiplicative factor $g\in[1+\frac{1}{\text{poly}(N)},\text{poly}(N)]$ can therefore be achieved by approximating $q_{\mathfrak{S}}$ to within a factor $g'\defeq g/(1+2^{-(s(N)-r(N)-2)})$. Since $g-1$ is at most polynomially small, $g'\in[1+\frac{1}{\text{poly}(N)},\text{poly}(N)]$ as well. Therefore, Stockmeyer's algorithm tells us that a value $\tilde{q}_{\mathfrak{S}}$ can be found in $\mathrm{BPP^{NP}}$, such that
  \begin{align}
	  \frac{q_{\mathfrak{S}}}{g'} \leq \tilde{q}_{\mathfrak{S}} \leq g' q_{\mathfrak{S}}.
	  \label{}
  \end{align}
  Then, Ineq.~\eqref{eqn:Supp:ExpClose} implies that
  \begin{align}
	  \frac{p_{\mathfrak{S}}}{g} = \frac{p_{\mathfrak{S}}}{g'(1+2^{-(s(N)-r(N)-2)}} \leq \tilde{q}_{\mathfrak{S}} \leq p_{\mathfrak{S}} g' (1+2^{-(s(N)-r(N)-2)}) = p_{\mathfrak{S}} g.
	  \label{}
  \end{align}
   and the arguments in the hardness proof for the exact MBCS in section~\ref{sec:Supp:ExactCase} still hold if we define ``exact'' in the sense of Ineq.~\eqref{eqn:Supp:AppB:UpperBoundDist}.

	Further, the proof for the hardness of the approximate MBCS problem from section~\ref{sec:Supp:ApproximateCase} is still completely valid in the algebraic number representation since an approximate MBCS oracle as defined in Definition~\ref{def:Supp:Oracle} is still able to sample from a probability distribution that is polynomially close in variation distance to the exact probability distribution.

 \section{The distribution of the elements of \texorpdfstring{$\tilde{X}$}{X}}\label{app:Supp:IIDGaussians}

 We will now show that, under the assumption that the elements $X_{i,j}$, $i,j=1,\dots,N$, of a complex $N\times N$ matrix $X$ are i.i.d. random variables with a complex standard normal distribution, the same holds for the elements of $\tilde{X}$ which are defined as
  \begin{align}
	  \tilde{X}_{i,j} = X_{i,j}\ee{\ii \varphi_{i,j}}.
  \end{align}

  If the elements of $X$ are i.i.d. $\mathcal{N}(0,1)_{\mathds{C}}$ variables, their joint probability distribution is
  \begin{align}
	  f_{X}(\{X_{i,j}\}) = \prod_{i,j=1}^{N} \frac{1}{\pi} \ee{-\abs{X_{i,j}}^2}.
  \end{align}

  Noting that the Jacobi determinant for the change of variables between the $\{X_{i,j}\}$ and the $\{\tilde{X}_{i,j}\}$ is $\det \vec{J} = 1$, we find that the common probability distribution for the elements of $\tilde{X}$ is
  \begin{align}
	  f_{\tilde{X}}(\{\tilde{X}_{i,j}\}) = f_{X}(\{\tilde{X}_{i,j} \ee{-\ii \varphi_{i,j}}\}) \cdot \underbrace{\det \vec{J}}_{=1} = \prod_{i,j=1}^N \frac{1}{\pi} \ee{- \abs{\tilde{X}_{i,j}}^2},
  \end{align}
  where we used that the complex Gaussian distributions are independent of the complex phase.

  Therefore, the elements of the matrix $\tilde{X}$ are still i.i.d. $\mathcal{N}(0,1)_{\mathds{C}}$ random variables if the elements of $X$ were.

  \section{Bound on failure probability of approximation}\label{app:Supp:FailureBound}

  As in \cite{S_aaronson2011computational}, we define $\Phi_{M,N}$ as the set of all port samples $\mathcal{D}$ and $G_{M,N}$ as the set of bunching-free port samples, i.e. as the set of port samples $\mathcal{D}$ which consist only of pairwisely different output port indices. 
  Further, let $\Omega_{L,N}$ be the set of all time samples $\{k_d\}$.

  We will now proceed to derive three inequalities which combined yield Ineq.~\eqref{eqn:Supp:MainInequality}.

  \begin{enumerate}
	  \item With $\Delta_{\mathfrak{S}} \defeq \abs{p_{\mathfrak{S}}-q_{\mathfrak{S}}}$, we find the expectation value
  \begin{align}
	  \EV_{\mathfrak{S}\in G_{M,N}\otimes \Omega_{L,N}}[\Delta_{\mathfrak{S}}] \defeq \frac{\sum_{\mathfrak{S}\in G_{M,N}\otimes\Omega_{L,N}} \Delta_{\mathfrak{S}}}{\abs{G_{M,N}} \abs{\Omega_{L,N}}} \leq \frac{\sum_{\mathfrak{S}\in \Phi_{M,N}\otimes\Omega_{L,N}} \Delta_{\mathfrak{S}}}{\abs{G_{M,N}} \abs{\Omega_{L,N}}}\\
	  \stackrel{\text{def}}{=} \frac{2 \big\| \mathcal{P}_{\mathcal{O}_R} - \mathcal{P}_R \big\|}{\abs{G_{M,N}}\abs{\Omega_{L,N}}} = \frac{2 \big\| \mathcal{P}_{\mathcal{O}_R} - \mathcal{P}_R \big\|}{\binom{M}{N} L^N} \leq \frac{2\beta}{\binom{M}{N} L^N} < 3\beta \frac{N!}{(L M)^N},
  \end{align}
  where, in the penultimate step, we used the Definition~\ref{def:Supp:Oracle} of an approximate MBCS oracle and, in the last step, the fact that $M = \omega(N^2)$ ($M$ asymptotically grows faster than $N^2$) \cite{S_aaronson2011computational}.
  Therefore, Markov's inequality gives
  \begin{align}
	  \Pr_{\mathfrak{S}\in G_{M,N}\otimes\Omega_{L,N}}\left[ \Delta_{\mathfrak{S}} > 3\beta \frac{N!}{(LM)^N} \frac{4}{\delta} \right] < \frac{\delta}{4} 
  \end{align}
  which, by choosing $\beta = \varepsilon \delta/24$, simplifies to
  \begin{align}
	  \Pr_{\mathfrak{S}\in G_{M,N}\otimes\Omega_{L,N}}\left[ \Delta_{\mathfrak{S}} > \frac{\varepsilon}{2} \frac{N!}{(LM)^N} \right] < \frac{\delta}{4} .
	  \label{eqn:Supp:MarkovInequalityIGeneralSample}
  \end{align}
  The MBCS oracle only knows $A$ but has no information whatsoever about which sample $\mathfrak{S}^{*}$ has been chosen from $G_{M,N}\otimes \Omega_{L,N}$. Thus, Eq.~\eqref{eqn:Supp:MarkovInequalityIGeneralSample} implies
  \begin{align}
	  \Pr_{X, A}\left[ \Delta_{\mathfrak{S}^*} > \frac{\varepsilon}{2} \frac{N!}{(LM)^N} \right] < \frac{\delta}{4}.
	  \label{eqn:Supp:MarkovInequalityISpecialSample}
  \end{align}
\item 
	As shown in step 3 of the proof of Theorem~\ref{thm:PowerApproximateOracle}, we can apply Stockmeyer's algorithm \cite{S_Stockmeyer1983} to find an approximate $\tilde{q}_{\mathfrak{S}^*}$ of $q_{\mathfrak{S}^*}$ in $\mathrm{FBPP^{NP^{\mathcal{O}_{\mathit{R}}}}}$. 
	The algorithm guarantees that, for an arbitrary $\alpha>0$ and a runtime polynomial in $1/\alpha$ and $M$,
  \begin{align}
	  \Pr[\abs{\tilde{q}_{\mathfrak{S}^*} - q_{\mathfrak{S}^*}}> \alpha\, q_{\mathfrak{S}^*}] \leq \Pr[\tilde{q}_{\mathfrak{S}^*} > (1+\alpha)q_{\mathfrak{S}^*} \lor \tilde{q}_{\mathfrak{S}^*} < q_{\mathfrak{S}^*}/(1+\alpha)] < \frac{1}{2^M},
	  \label{eqn:Supp:InequalityStockmeyerII}
  \end{align}
  where the first inequality follows from $1-\alpha < 1/(1+\alpha)$ and the second inequality is equivalent to Ineq.~\eqref{eqn:Supp:StockmeyerSuccessProbability}. With the choice $\alpha=\varepsilon\delta/16$, this becomes
  \begin{align}
	  \Pr[\abs{\tilde{q}_{\mathfrak{S}^*} - q_{\mathfrak{S}^*}}> \frac{\varepsilon\delta}{16}\, q_{\mathfrak{S}^*}] < \frac{1}{2^M}.
	  \label{eqn:Supp:StockmeyerInequalitySpecialSample}
  \end{align}
\item Lastly,
  \begin{align}
	  \EV_{\mathfrak{S}\in G_{M,N}\otimes \Omega_{L,N}}[q_{\mathfrak{S}}] = \frac{\sum_{\mathfrak{S}\in G_{M,N}\otimes \Omega_{L,N}}q_{\mathfrak{S}}}{\abs{G_{M,N}} \abs{\Omega_{L,N}}} \leq \frac{1}{\abs{G_{M,N}}\abs{\Omega_{L,N}}} = \frac{1}{\binom{M}{N}L^N } < 2 \frac{N!}{(L M)^N}
  \end{align}
  and thus by invoking Markov's inequality
  \begin{align}
	  \Pr_{\mathfrak{S}\in G_{M,N}\otimes \Omega_{L,N}} \left[ q_{\mathfrak{S}} > 2 \frac{N!}{(LM)^2} \frac{4}{\delta} \right] < \frac{\delta}{4}.
  \end{align}
  With the same arguments leading to Ineq.~\eqref{eqn:Supp:MarkovInequalityISpecialSample}, this implies
  \begin{align}
	  \Pr_{X,A} \left[ q_{\mathfrak{S}^*} > 2 \frac{N!}{(LM)^2} \frac{4}{\delta} \right] < \frac{\delta}{4}.
	  \label{eqn:Supp:MarkovInequalityIISpecialSample}
  \end{align}

  \end{enumerate}
  Indeed, by using the inequalities
  \begin{align}
	  \Pr[\abs{a-b} > c ] &\leq \Pr\left[\abs{a} > \frac{c}{2}\right] + \Pr\left[\abs{b} > \frac{c}{2}\right] \qquad (c>0)
	  \label{eqn:Supp:InequalityProbabilityCombination1}
	  \intertext{and}
	  \Pr[\abs{a-b} > c] &\leq \Pr\left[b > \frac{c}{k}\right] + \Pr\left[ \abs{a-b} > k b \right] \qquad (c,k>0),
	  \label{eqn:Supp:InequalityProbabilityCombination2}
  \end{align}
  demonstrated in App.~\ref{app:Supp:InequalityProbabilities}, we obtain
  \begin{align}
	  \Pr\left[ \abs{\tilde{q}_{\mathfrak{S}^*} - p_{\mathfrak{S}^*}} > \varepsilon \frac{N!}{(LM)^N} \right] &\leq \Pr\left[ \abs{\tilde{q}_{\mathfrak{S}^*} - q_{\mathfrak{S}^*}} > \frac{\varepsilon}{2}\frac{N!}{(LM)^N} \right] + \Pr\left[ \abs{q_{\mathfrak{S}^*} - p_{\mathfrak{S}^*}}> \frac{\varepsilon}{2} \frac{N!}{(LM)^N} \right] \\
	  &\leq \Pr_{X,A}\left[ q_{\mathfrak{S}^*} > \frac{8}{\delta} \frac{N!}{(LM)^N} \right] + \Pr\left[\abs{\tilde{q}_{\mathfrak{S}^*} - q_{\mathfrak{S}^*}} > \frac{\varepsilon\delta}{16} q_{\mathfrak{S}^*}\right] + \Pr_{X,A}\left[ \Delta_{\mathfrak{S}^*} > \frac{\varepsilon}{2} \frac{N!}{(LM)^N} \right] \\
	  &< \frac{\delta}{4} + \frac{1}{2^M} + \frac{\delta}{4} = \frac{\delta}{2} + \frac{1}{2^M},
  \end{align}
  where in the last step, we inserted Ineqs.~\eqref{eqn:Supp:MarkovInequalityISpecialSample}, \eqref{eqn:Supp:StockmeyerInequalitySpecialSample}, and \eqref{eqn:Supp:MarkovInequalityIISpecialSample}.
  This inequality is equivalent to Ineq.~\eqref{eqn:Supp:MainInequality}, as can be seen by inserting the expression for $p_{\mathfrak{S}^*}$ from Eq.~\eqref{eqn:Supp:DetectionProbabilitySpecialSample}.

  \section{Inequalities for probabilities}\label{app:Supp:InequalityProbabilities}
  Here, we want to prove the inequalities~\eqref{eqn:Supp:InequalityProbabilityCombination1} and \eqref{eqn:Supp:InequalityProbabilityCombination2}.
  \begin{proof}
	  First, observe that
	  \begin{align}
		  \abs{a} \leq \frac{c}{2} \land \abs{b} \leq \frac{c}{2} \Rightarrow \abs{a-b} \leq c
		  \intertext{and therefore}
		  \Pr\left[ \abs{a} \leq \frac{c}{2} \land \abs{b} \leq \frac{c}{2} \right] \leq \Pr\left[ \abs{a-b} \leq c \right].
		  \label{eqn:Supp:InequalityProbabilitiesAux1}
	  \end{align}
	  Second, for two events $A$ and $B$, we know that
	  \begin{align}
		  \Pr[A \cap B] = \Pr[A] + \Pr[B] - \Pr[A \cup B]
	  \end{align}
	  and it follows that ($\bar{A},\bar{B}$ denote the complement of $A,B$)
	  \begin{align}
		  1 - \Pr[A \cap B] = 1 - \Pr[A] - \Pr[B] + \Pr[A \cup B] \leq 2 - \Pr[A] - \Pr[B] = \Pr[\bar{A}] + \Pr[\bar{B}].
		  \label{eqn:Supp:InequalityProbabilitiesAux2}
	  \end{align}
	  Now, Ineq.~\eqref{eqn:Supp:InequalityProbabilityCombination1} follows as
	  \begin{align}
		  \Pr[\abs{a-b}>c] = 1 - \Pr[\abs{a-b} \leq c] \leq 1 - \Pr\left[ \abs{a} \leq \frac{c}{2} \land \abs{b}\leq \frac{c}{2} \right] \leq \Pr\left[ \abs{a} > \frac{c}{2} \right] + \Pr\left[ \abs{b} > \frac{c}{2} \right],
	  \end{align}
	  where we used Ineq.~\eqref{eqn:Supp:InequalityProbabilitiesAux1} and Ineq.~\eqref{eqn:Supp:InequalityProbabilitiesAux2} in the second and third step, respectively.

	  Ineq.~\eqref{eqn:Supp:InequalityProbabilityCombination2} can be proved in a similar way.
  \end{proof}

\end{document}